\newtheorem{theorem}{Theorem}
\newtheorem{definition}[theorem]{Definition}
\newtheorem{example}[theorem]{Example}
\newtheorem{conjecture}{Conjecture}
\definecolor{commentcolor}{RGB}{30,144,255} % Blue color for the comment
\definecolor{replycolor}{RGB}{34,139,34} % Green color for the reply
\newtcolorbox{commentbox}{
    breakable,
    colback=white,
    colframe=commentcolor,
    boxrule=0.5mm,
    arc=2mm,
    left=2mm,
    right=2mm,
    before=\vspace{10pt},
    after=\vspace{10pt},
}
\newtcolorbox{replybox}{
    breakable,
    colback=white,
    colframe=replycolor,
    boxrule=0.5mm,
    arc=2mm,
    left=6mm,
    right=2mm,
    before=\vspace{10pt},
    after=\vspace{10pt}
}
\begin{document}
\title{Zero Temperature Dynamics of Ising Systems on Hypercubes}

\author{R.~Chen}
\affiliation{New York University, New York, NY 10012, USA}
\affiliation{Department of Mathematics, University of California, San Diego, La Jolla CA 92093, USA. (Starting on Sep 22nd 2025)}

\author{J.~Machta}
\affiliation{Department of Physics, University of Massachusetts,
            Amherst, MA 01003, USA}
\affiliation{Santa Fe Institute, 1399 Hyde Park Road, Santa Fe, NM 87501, USA}

\author{C.~M.~Newman}
\affiliation{Courant Institute of Mathematical Sciences,
            New York University, New York, NY 10012, USA}
\author{D.~L.~Stein}
\affiliation{Department of Physics and Courant Institute of Mathematical Sciences,
            New York University, New York, NY 10012, USA}

\begin{abstract}
We study the zero-temperature Glauber dynamics of homogeneous Ising ferromagnets on hypercubes, as their dimension $d$ varies. We investigate the asymptotic ($d\to\infty$ and time $t\rightarrow\infty$) behavior of various quantities on hypercubes, such as the final magnetization, the probability for the system to enter a ground state, etc. Our numerical studies are carried out using a uniformly random initial state but with the constraint that
the initial magnetization is zero.  The final states can be divided into three categories: ground states, frozen states, and blinker states. We use the notion of a k-core to describe the geometry of the frozen states and give an exponential lower bound for the number of frozen states in terms of $d$. Blinker states ---  which exist only in even $d$ --- are final states containing at least one blinker (a permanently flipping spin). Blinkers states can have rich local structures; we give explicit constructions for configurations that contain blinkers and prove that the lowest possible dimension for blinker configurations is $d=8$. We also study the `Nature vs. Nurture' problem on hypercubes, asking how much the final state depends on the information contained in the initial configuration, and how much depends on the realization of the dynamical evolution. Finally, we provide several conjectures and suggest some open problems based on the numerical results.
\end{abstract}
\maketitle
\section{Introduction}
\noindent For the past 30 years, the zero-temperature Glauber dynamics of Ising systems have been studied theoretically and numerically on different kinds of graphs, especially $Z^d$. For homogeneous Ising ferromagnets on $Z^d$ (with nearest-neighbor edges), it has been proven~\cite{Newman2000a,Newman2000b} that if the limit $L\rightarrow\infty$ is taken before the limit $t\rightarrow\infty$,  every spin will flip infinitely often for $d\leq 2$; the asymptotic behavior for $d\geq 3$ remains unknown. On the other hand, if the limit $t\rightarrow\infty$ is taken before the limit $L\rightarrow\infty$~\cite{Spirin2001a,Spirin2001b, Olejarz2011}, when $d=1$ the system will always enter a ground state. For $d=2$, the system can either evolve toward a ground state or else (with probability about $0.339$) enter a ``frozen'' state consisting of vertical or horizontal stripes \cite{Kipton2009}.  For $d\geq 3$, the number of frozen states grows extensively, and there exist permanently flipping spins in certain regions \cite{Olejarz2011}.\\\newline
In this paper, we study the zero-temperature dynamics of homogeneous Ising ferromagnets on hypercubes, as the dimension $d$ varies. Because $L$ is fixed on hypercubes, there is no thermodynamic limit. There are two  properties of hypercubes that aid their analysis: the first is self-recursiveness --- two $d$-dimensional hypercubes can be joined to create a $(d+1)$-dimensional hypercube. The second is symmetry --- every site is essentially the same. We find that many dynamical simulations end up in frozen states, which trap the dynamics and prevent the hypercube from entering a ground state. Another interesting feature is the difference in behavior between odd and even dimensions, arising from the presence of zero-energy spin flips in even dimensions and their absence in odd dimensions. Because of this, even-dimensional hypercubes can display permanently flipping spins, which we call `blinkers', which are absent in odd dimensions.\\\newline
There are two types of randomness in our model, that of the initial spin configuration and that of the realization of the dynamics. We are particularly interested in the `Nature vs. Nurture' problem~\cite{Ye2013,Ye2017}, in which one asks how much of the state at time $t$ can be predicted from the information contained in the initial configuration and how much depends on the specific dynamical evolution realization, all as a function of dimension~$d$.\\\newline
This paper is organized as follows. In Section 2, the model and methods for studying it are introduced. In Section 3, we present various numerical results on hypercube dynamics, in particular their long-time behavior as a function of dimension. Section 4 focuses on theoretical results on the geometries of frozen and blinker states; in particular, we give an exponential lower bound for the number of frozen states, provide explicit constructions for several blinker configurations, and show that the lowest dimension for blinker configurations is $8$. In Section 5, we propose several conjectures based on the numerical results, provide some supporting heuristics, and suggest a number of open problems.
\section{Model and Method}
\noindent We consider a homogenous ferromagnetic Ising system $\sigma_i = \pm 1$ on the sites $i$ of the hypercube $Q_d=\{0,1\}^d$, where the system Hamiltonian for a spin configuration $\sigma\in\{-1,1\}^{Q_d}$ is:
\begin{equation}
    \mathcal{H} = -\sum_{\langle i,j\rangle}\sigma_i\sigma_j,
\end{equation}
with the sum over the nearest-neighbor pairs. We denote by $\sigma^d$ a spin configuration on the $d$ dimensional hypercube; in particular, $\sigma^{d,+}\text{ and }\sigma^{d,-}$ are the spin configurations of the ground states (i.e., all spins +1 or all spins -1) on the d-dimensional hypercube.\\\newline
The system is initialized with a random initial configuration constrained to have zero magnetization: each spin is either +1 or -1 with probability $\frac{1}{2}$ but subject to the constraint $\sum_i\sigma_i=0$. The system then evolves according to zero-temperature Glauber dynamics: a site $i$ is chosen uniformly at random, and the energy change $\Delta E_i$ caused by flipping $\sigma_i$  is computed. If $\Delta E_i < 0$, the spin is flipped, and if $\Delta E_i > 0$ the spin is not flipped. If $\Delta E_i = 0$, the spin is flipped with probability $\frac{1}{2}$.
\\\newline
We adopt the algorithm used in \cite{Ye2013} to accelerate the dynamics. Before time $t_0$, the system evolves according to the usual Glauber dynamics described above, with the time incremented by $\frac{1}{2^d}$ after each attempted spin flip. After a time $t_0$, some sites remain active (i.e., their $\Delta E_i \leq 0$), so in order to speed up the simulation, we implement a kinetic Monte-Carlo method: we create a list $a(t)$ of active sites, and then randomly choose a site from the active list. Now time is increased with $\Delta E_i \leq 0$ at each step by $\frac{1}{\left|a(t)\right|}$, where $\left|a(t)\right|$ is the length of the active list. If a spin flips, we need to update the active list by removing (unless $\Delta E_i= 0$) the flipped spin and checking whether its neighbors must be added or removed from the active list.\\\newline
There are three types of final states: two ground states (all spins either $+1$ or all $-1$), frozen states (non-ground-state spin configurations with all spins having $\Delta E_i>0$), and blinker states, in which every spin is either frozen or else flips forever with $\Delta E_i =0$. Blinker spins have previously been found in ferromagnets on Euclidean lattices with $d\geq 3$~\cite{Olejarz2011}. In the hypercube, blinker states can exist only in even dimensions; zero-energy flips ($\Delta E_i = 0$) are not possible in odd dimensions. We denote a spin $\sigma_i$ as a $1$-blinker if all its neighbors are permanently frozen, with exactly half $+1$ and half $-1$. There are many other interesting structures for blinkers, which we will return to in Section 4.
\\\newline
The existence of blinkers complicates the appropriate definition of the limit $t\rightarrow \infty$.  To define final states allowing for blinkers we introduce the idea of all dynamically possible futures.  
Let $G(t)$  be the set of all spin configurations that are dynamically accessible starting from the spin configuration $\sigma(t)$ after running for arbitrarily long times.  By construction, $|G(t)|$ is non-increasing so the limit $G_\infty=\lim_{t\rightarrow\infty}G(t)$ exists.  For ground states and frozen states, $G_\infty$ consists of a single spin configuration, but for blinker states $|G_\infty|>1$.  Since at zero temperature the energy is non-increasing, all configurations in $G_\infty$ must have the same energy, and the dynamical transitions between these configurations must be reversible.  Thus, the dynamics induce an undirected, connected graph, $\mathcal{G}=(G_\infty,V_\infty)$ where the edges $V_\infty$ of the graph correspond to the possible energy-conserving single-spin-flip transitions between the configurations in $G_\infty$.  Every configuration in $G_\infty$ will be visited infinitely often and with equal probability.  Associated with each spin configuration in $G_\infty$ is an active list of spins, all of which can be flipped without energy change.  The union of all of these active lists is the set of blinkers in the final state, which we refer to as $g_\infty$. \\\newline
In Appendix \ref{app:finalstate}, we present a final state algorithm that can be run at any dynamical time $t$, which determines whether or not the system is in a final state or not, i.e., whether $G(t) = G_\infty$. However, for the simulations reported below, we choose a simpler approach. We first run the kinetic Monte Carlo method for a sufficient time~$t$ to ensure that only spins with zero energy change are contained in the active list $a(t)$. If $a(t)$ is empty, the system has entered either a ground state or a frozen state. If not, we find $g_\infty$ using a Depth First Search (DFS) algorithm. Starting with each spin $\sigma_i \in a(t)$, we examine its neighbors $\sigma_j$. For each neighbor, we check its energy change $\Delta E_j$. If $\Delta E_j \leq 2$, $\sigma_j$ is a blinker because it can either flip with zero energy cost immediately or after $\sigma_i$ flips. We then recursively apply this process to the neighbors of $\sigma_j$. The DFS stops when all neighbors of the current spin have $\Delta E \geq 2$, as they cannot flip without increasing energy. This approach ensures that all blinkers are efficiently identified through systematic exploration.\\\newline
%will be a set of spin configurations that induces a strongly connected graph $(V_B, E_B)$, in the sense that each vertex $i\in V_B$ corresponds a spin configuration in $G^d_\infty$, and there's an edge from $i$ to $j$ if the spin configuration at $i$ can dynamically access the spin configuration at $j$ in the future. At the same time, we define $g^d(t)$ to be the union of the active list $a^d(t)$ in each spin configuration in $G^d_\infty$, and similarly define $g^d_\infty = \lim_{t\rightarrow\infty}g^d(t)$
%On the single-copy dynamics, 
Fortunately, as we will see in Section~3, the fraction of blinkers is relatively small. We investigate the following observables as functions of dimension $d$, where the sums exclude the blinkers. We begin with
\begin{equation}
    M_\infty(d) = \lim_{t\rightarrow\infty}\sum_{i\notin g^d_\infty}\sigma^d_i(t),
\end{equation}
\noindent the magnetization of the final configuration excluding the blinkers.
We also study
\begin{equation}
    \mathbb{P}_{\sigma(0),\omega}(\{G^d_\infty =\sigma^{d,+}\text{ or }\sigma^{d,-}\}),
\end{equation}
\noindent the probability for a $d$-dimensional hypercube to enter a ground state as $t\rightarrow\infty$ (as indicated, the probability is defined with respect to the initial configuration~$\sigma(0)$ and the dynamical realization $\omega$).
\\
\noindent Next we study the distribution of the local fields, defined here as the sum of the neighboring spins of $\sigma^d_i$ normalized by $\frac{1}{d}$ but excluding the blinkers (note that by the symmetry of hypercubes, the distribution of $m^i_\infty(d)$ is independent of the choice of $i$, hence we can set $i$ to be the origin $\bar{0}$, and write $m_\infty(d) = m^{\bar{0}}_\infty(d)$ for simplicity):
\begin{equation}
\label{eq:m}
    m^i_\infty(d) = \lim_{t\rightarrow\infty}\frac{1}{d}\sum_{\langle i,j\rangle,j\notin g^d_\infty}\sigma^d_j(t)\, .
\end{equation}
\noindent 
Finally, we study the energy of the final configuration: \begin{equation}
\mathcal{H}_\infty(d) = -\lim_{t\rightarrow\infty}\sum_{\langle i,j\rangle}\sigma^d_i(t)\sigma^d_j(t),    
\end{equation}
\noindent as well as $\overline{f(d)}$, the average number of flips per site excluding the blinkers.\\\newline
\noindent To better understand the nature of the frozen states, we use the concept of the largest $k$-core. We first split the final configuration of a $d$-dimensional hypercube into two graphs, one consisting of only $+1$ spins, and one consisting of only $-1$ spins, preserving the edges that connect spins with the same signs and discarding the edges that connect spins with opposite signs. In each graph, any connected subgraph with all vertices having degree larger than $k\geq\lfloor \frac{d}{2}\rfloor+1$ is frozen. We now ask, what is the size of the largest of these subgraphs? This is the problem of finding the largest $k$-core, where a $k$-core denotes a connected subgraph with at least $k$ neighbors for each vertex in it. We define $\kappa(d) = \max\{k_+,k_-\}$, where $k_+$ (resp., $k_-$) is the core number for the largest $k$-core in the $+1$ (resp., $-1$) graph.
\\\newline
 We  investigate the following two probabilities associated with blinkers.
 \begin{equation}
     \mathbb{P}_{\sigma(0),\omega}(\{|G^d_\infty| > 1\}),
 \end{equation}
\noindent  the probability to enter a blinker state, and 
  \begin{equation}
      \mathbb{P}_{\sigma(0),\omega}(\{\sigma^d_i \in g^d_\infty\}),
  \end{equation}
\noindent   the probability for spin $\sigma^d_i$ to eventually become a blinker (note that again by symmetry, this probability is independent of the choice of $i$).
\\\newline
We also study double-copy dynamics, in which we adopt the method used in \cite{Ye2013}, where the authors introduced the notion of `heritability' to study the so-called nature vs.~nurture problem. Here we initialize a pair of systems
%Ising Systems on a Hypercube at dimension = $d$ 
with the same initial configuration and let each system (`twin') evolve independently according to Glauber dynamics. At any time step $t$, we introduce the quantity
\begin{equation}
    q_t(d) = \sum_{i\notin g^d_\infty} \frac{1}{2^d}\sigma^1_i(t)\sigma^2_i(t) ,
\end{equation} 
\noindent which describes the overlap between the two twins, where $S_i^k(t), k=1,2$ stands for the state of the $i$th spin in twin k at timestep t. We study the average $\overline{q_t(d)}$ over both the initial configuration and dynamics, and examine 
\begin{equation}
    \overline{q_\infty(d)} = \lim_{t\rightarrow\infty}\overline{q_t(d)},
\end{equation}
\noindent which is the overlap of the twins in their final states. We are interested in the asymptotic behavior of $\overline{q_\infty(d)}$  when $d\rightarrow\infty$. If it goes to $1$, then the intial configuration completely determines the final state (nature wins); if it goes to $0$, then the final state is completely determined by the dynamical history (nurture wins). Of course, the answer could be strictly between 0 and 1.
%\newpage
\section{Results}
\subsection{Single-copy Dynamics}
\noindent We begin with numerical results for the magnetization of the final state~$M_\infty(d)$. Figure \ref{fig:Figure1} shows the sample distribution of $M_\infty(d)$ scaled by $N$ for different values of $d$. The most salient feature is the difference between the distributions in even and odd dimensions: in even dimensions distributions are generally broader. But as $d\rightarrow \infty$, the distributions for both even and odd dimensions concentrate towards zero, %especially those at high odd dimensions, which are close to a Gaussian distribution, 
hinting at a possible central-limit-type behavior as $d\rightarrow\infty$.
\begin{figure}[h!]
    \centering
    \includegraphics[scale=0.38]{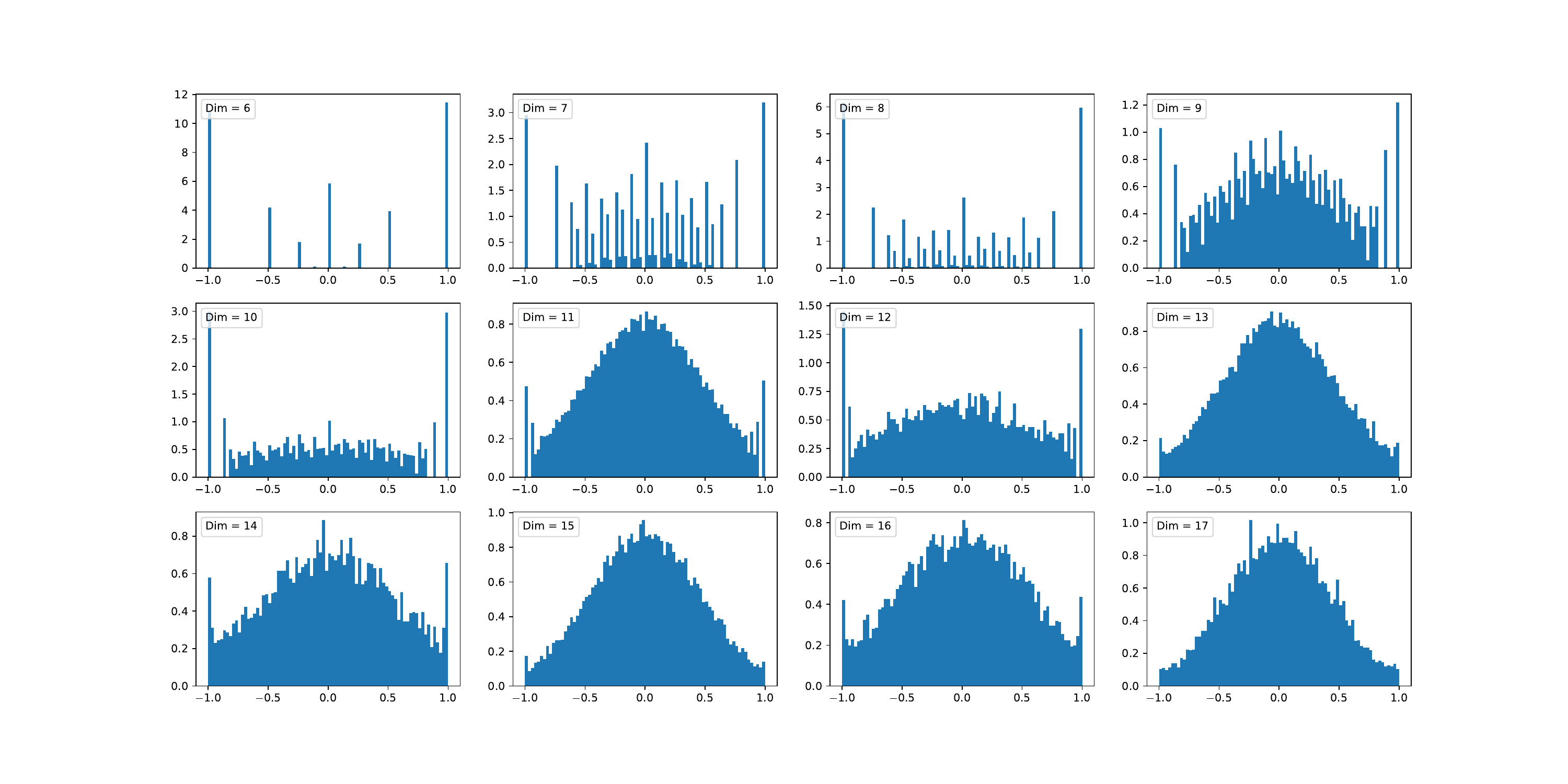}
    \caption{Sample distribution of $M_\infty(d)/N$ at different dimensions}
    \label{fig:Figure1}
\end{figure}
\\\newline
The sum of the two peaks at the ends of the distribution give $\mathbb{P}_{\sigma(0),\omega}(\{G^d_\infty = \sigma^{d,+}\text{ or }\sigma^{d,-}\})$, the probability for the hypercube to enter a ground state as $t\rightarrow\infty$. Figure \ref{fig:Figure2} shows its estimator $\hat{\mathbb{P}}_{\sigma(0),\omega}(\{G^d_\infty = \sigma^{d,+}\text{ or }\sigma^{d,-}\})$ as a function of dimension.
%, under log scale. 
$\hat{\mathbb{P}}_{\sigma(0),
\omega}(\{G^d_\infty = \sigma^{d,+}\text{ or }\sigma^{d,-}\})$ is larger for even dimensions, but decays to zero exponentially fast in dimension for both even and odd dimensions, again consistent with central-limit-type behavior.
\begin{figure}[h!]
    \centering
    \includegraphics[scale=0.40]{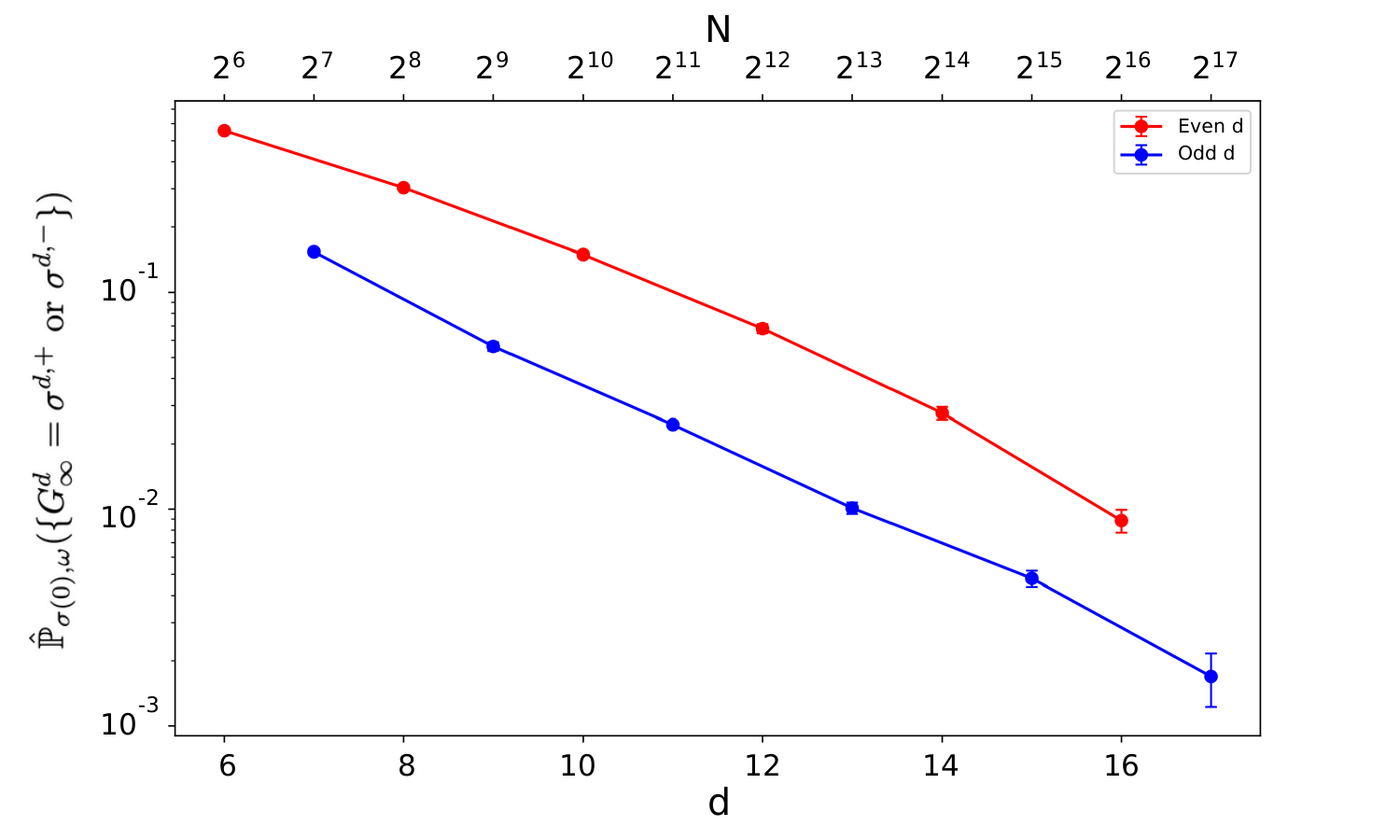}
    \caption{The probability to enter the ground state $\hat{\mathbb{P}}_{\sigma(0),\omega}(\{G^d_\infty = \sigma^{d,+}\text{ or }\sigma^{d,-}\})$ vs.\ dimension on a log-linear scale.}
    \label{fig:Figure2}
\end{figure}
\\\newline
Figures \ref{fig:Figure3}a and  \ref{fig:Figure3}b show the sample variance on log-log plots of $M_\infty(d)$ scaled by $\sqrt{N}$ and $N$. As $d\rightarrow\infty$, the difference between the sample variance in even and odd dimensions gradually decreases, suggesting the possible disappearance of the even-odd disparity as dimension increases. A typical central-limit-type behavior would have the scaling factor for the variance of $M_\infty(d)$ be $\sqrt{N}$, but \cref{fig:Figure3}a shows that the sample variance of $M_\infty(d)$ grows exponentially fast with dimension when scaled by $\sqrt{N}$. Alternatively, \cref{fig:Figure3}b shows that the sample variance of $M_\infty(d)$ decays with dimension when scaled by $N$. This suggests that the correct scaling factor should be somewhere between $\sqrt{N}$ and $N$. Numerically, an  $N/\log\log N$ scaling gives a good fit, as shown in \cref{fig:Figure3}c. But one should keep in mind that CLT presumes that for sufficiently large d, the density concentrates near zero, and that we are very far from that asymptotic behavior, so we cannot say much at this point about the asymptotic behavior.
\begin{figure}[ht!]
    \centering
    \begin{minipage}{0.4\textwidth}
        \centering
        \includegraphics[width=\linewidth]{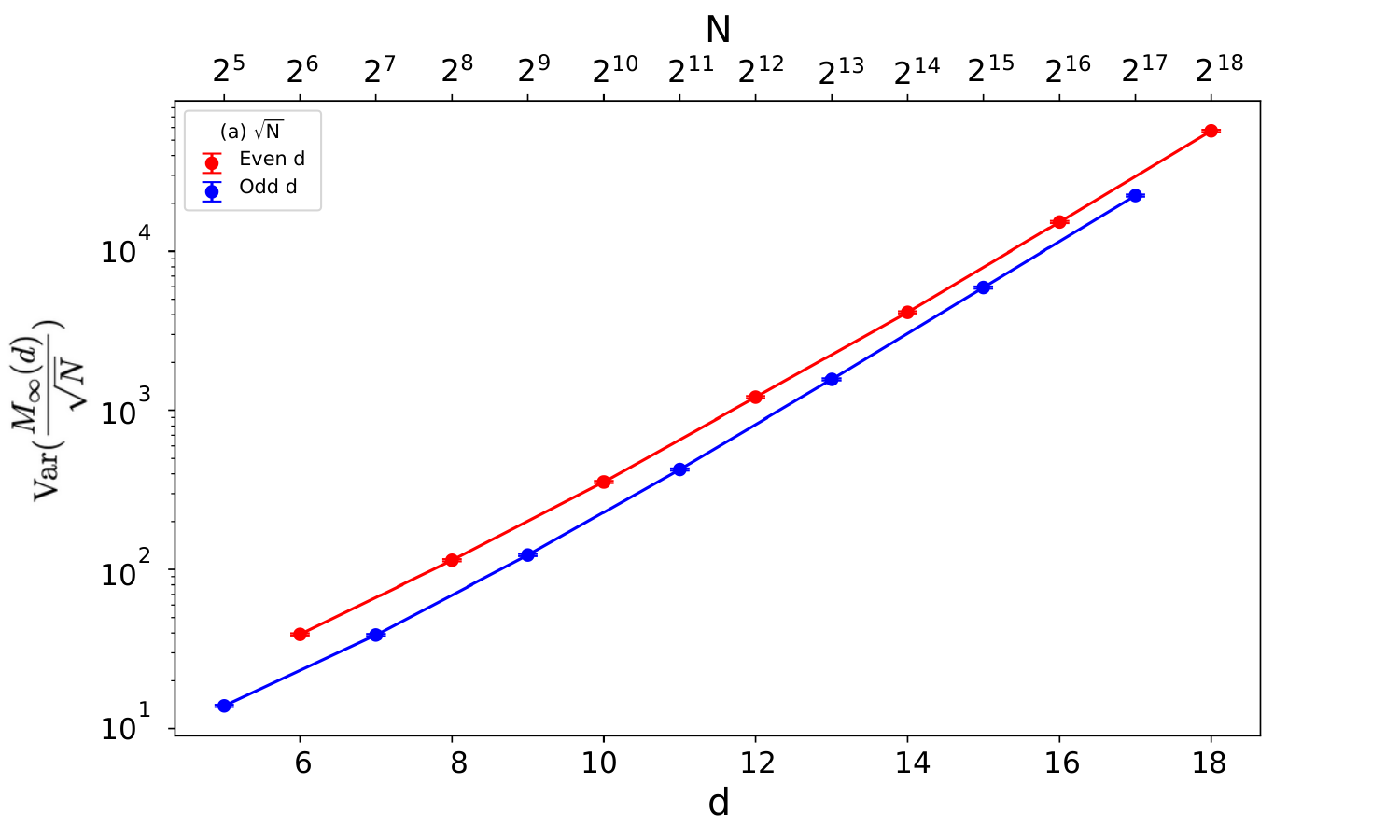}
        \caption*{(a)}
    \end{minipage}%
    \hfill
    \begin{minipage}{0.4\textwidth}
        \centering
        \includegraphics[width=\linewidth]{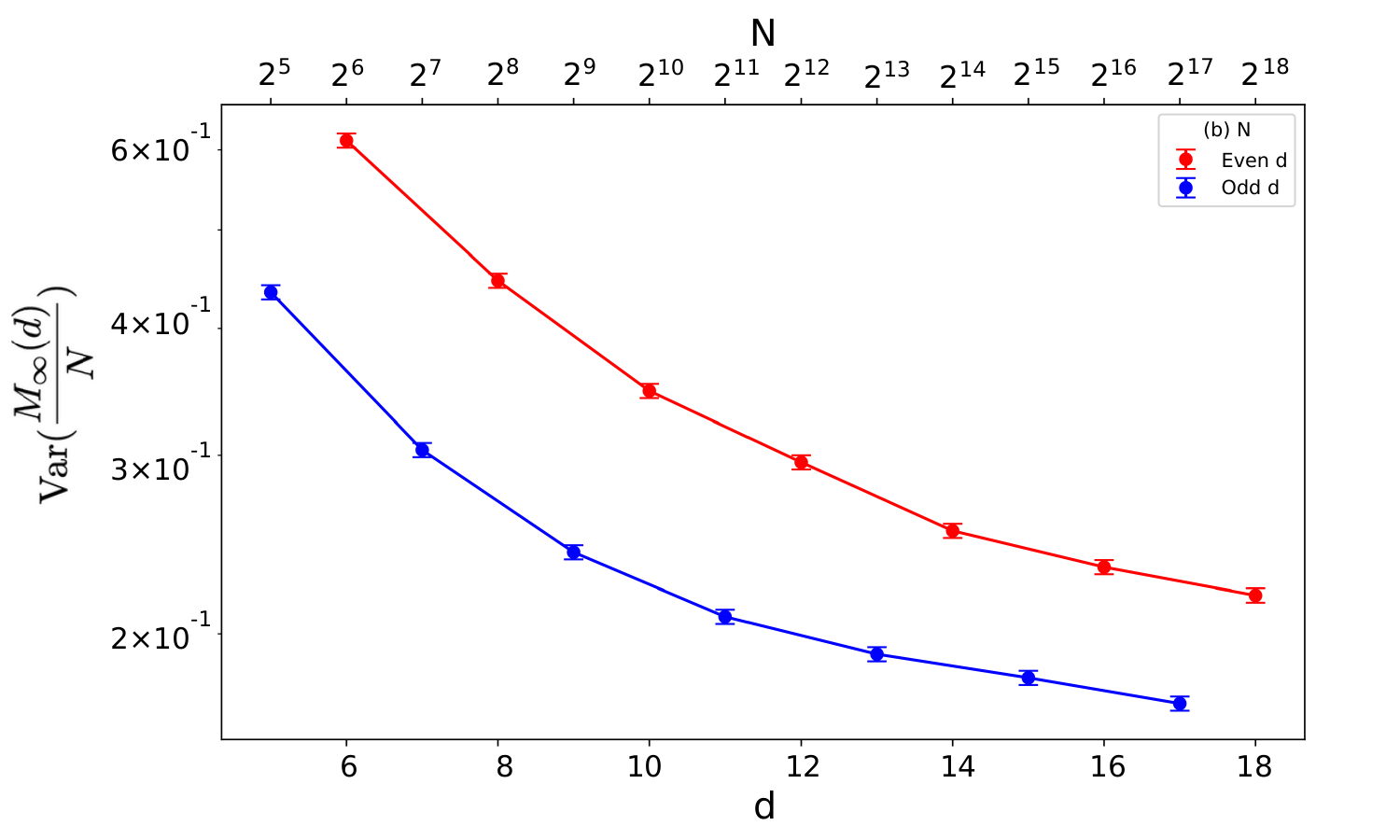}
        \caption*{(b)}
    \end{minipage}
    \begin{minipage}{0.4\textwidth}
        \centering
        \includegraphics[width=\linewidth]{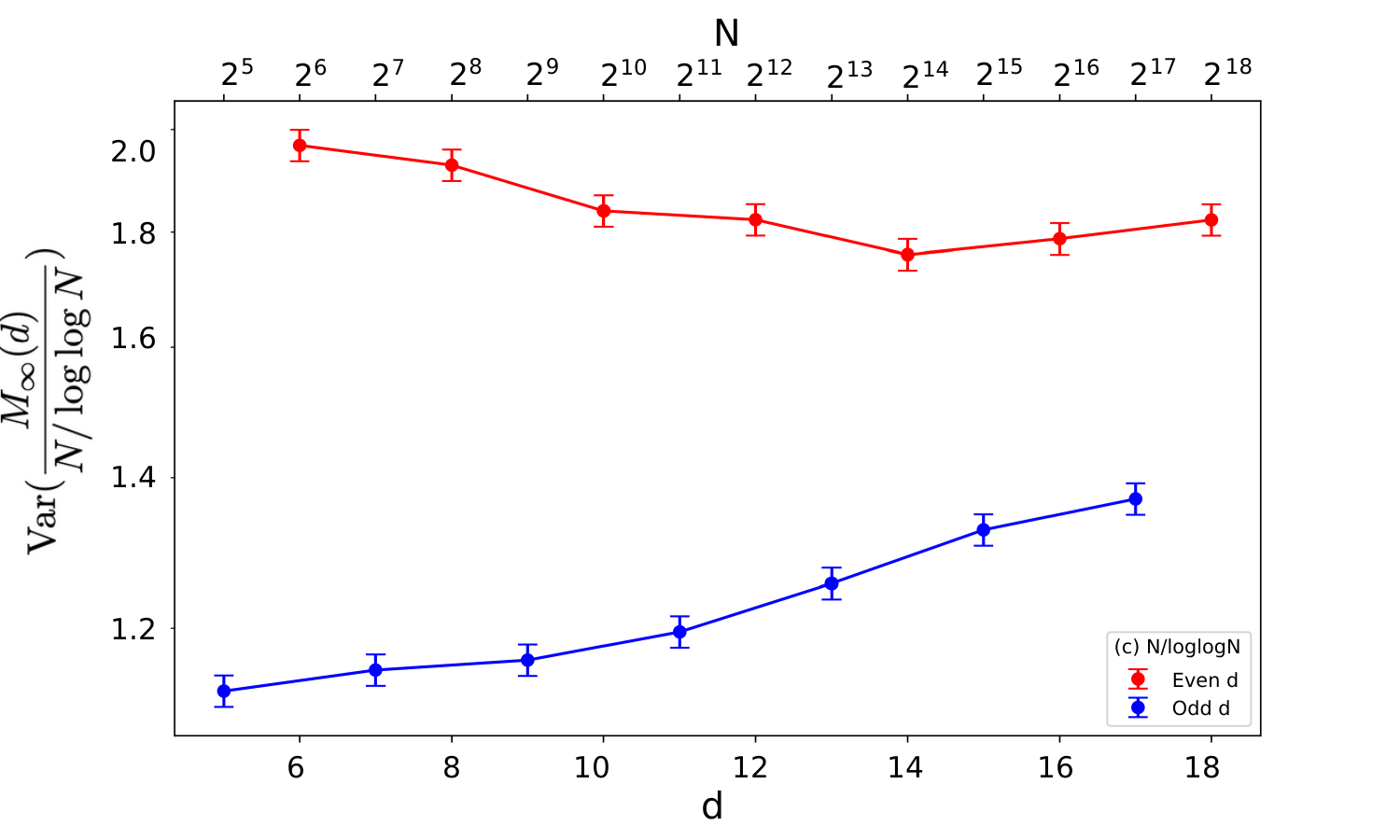}
        \caption*{(c)}
    \end{minipage}

    \caption{Sample variance of $M_\infty(d)$ under three different scalings at various dimensions: (a) scaled by $\sqrt{N}$, (b) scaled by $N$, and (c) scaled by $N / \log \log N$.}
    \label{fig:Figure3}
\end{figure}
\begin{figure}[h!]
    \centering
    \includegraphics[scale = 0.38]{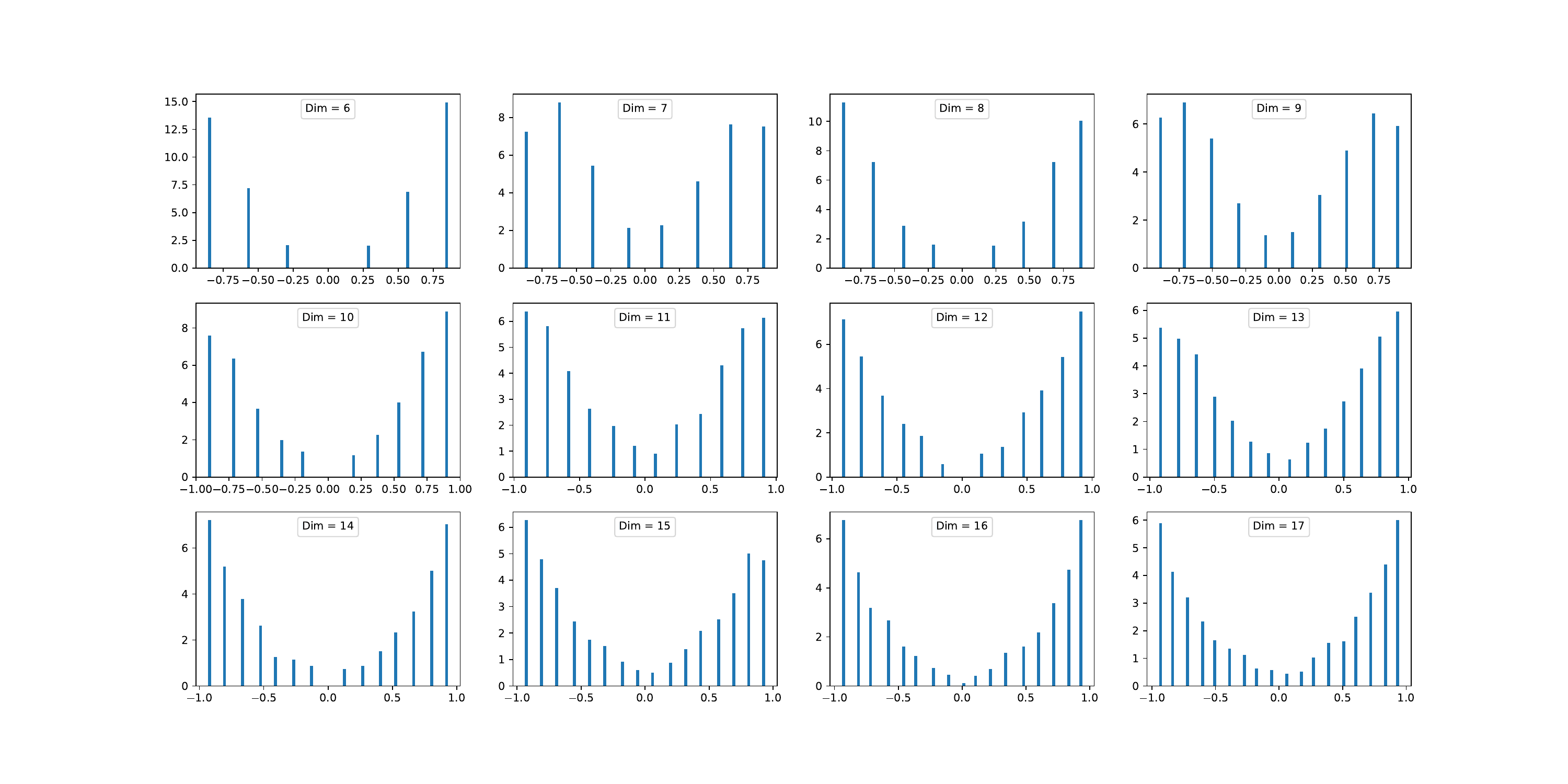}
    \caption{Sample distribution of $m_\infty(d)$ at different dimensions}
    \label{fig:Figure4}
\end{figure}\\
\noindent We next study the local fields (at the origin), $m_\infty (d)$, of the final configuration, see Eq.\ \eqref{eq:m}.
%(see (4)). 
This carries information about the tendency of neighboring spins in the final state to align. \cref{fig:Figure4} shows the sample distribution of $m_\infty(d)$ for different dimensions. As $d$ increases, the distribution neither converges to a delta function at $0$ (corresponding to zero-correlation between neighboring spins) nor to delta functions at $+1$ and $-1$ (all neighboring spins are aligned). Instead, the distribution resembles a concave-up parabola with a minimum at $0$. Note that in even $d$, the probability mass at $0$ corresponds to $\mathbb{P}_{\sigma(0),\omega}(B^d_i)$, the probability for a single site to become a blinker as $t\rightarrow\infty$.\\\newline
Another feature that is closely related to the local fields is the final energy per spin $\mathcal{H}_\infty(d)$. The final energy of a spin $\sigma_i$ equals $-d\sigma_im^{i}_\infty(d)/2$, so the mean final energy per spin $\mathcal{H}_\infty(d)$ is given by $d/2$ times the averge of $|m_\infty(d)|$.  Figure \ref{fig:Figure4} suggests the probability density of $m_\infty(d)$ converges to an even quadratic that vanishes at zero, in which case the mean energy per site should converge for large $d$ to, $\mathcal{H}_\infty(d)=-3d/8$.    \cref{fig:Figure5} shows the sample distribution of  $\mathcal{H}_\infty(d)$ for different values of $d$. The distribution is broad and skewed to the left. The last bin corresponds to configurations reaching the ground state, with energy per spin, $-d/2$. There is a relatively sharp cut-off on the right-hand side.
\begin{figure}[h!]
    \centering
    \includegraphics[scale = 0.38]{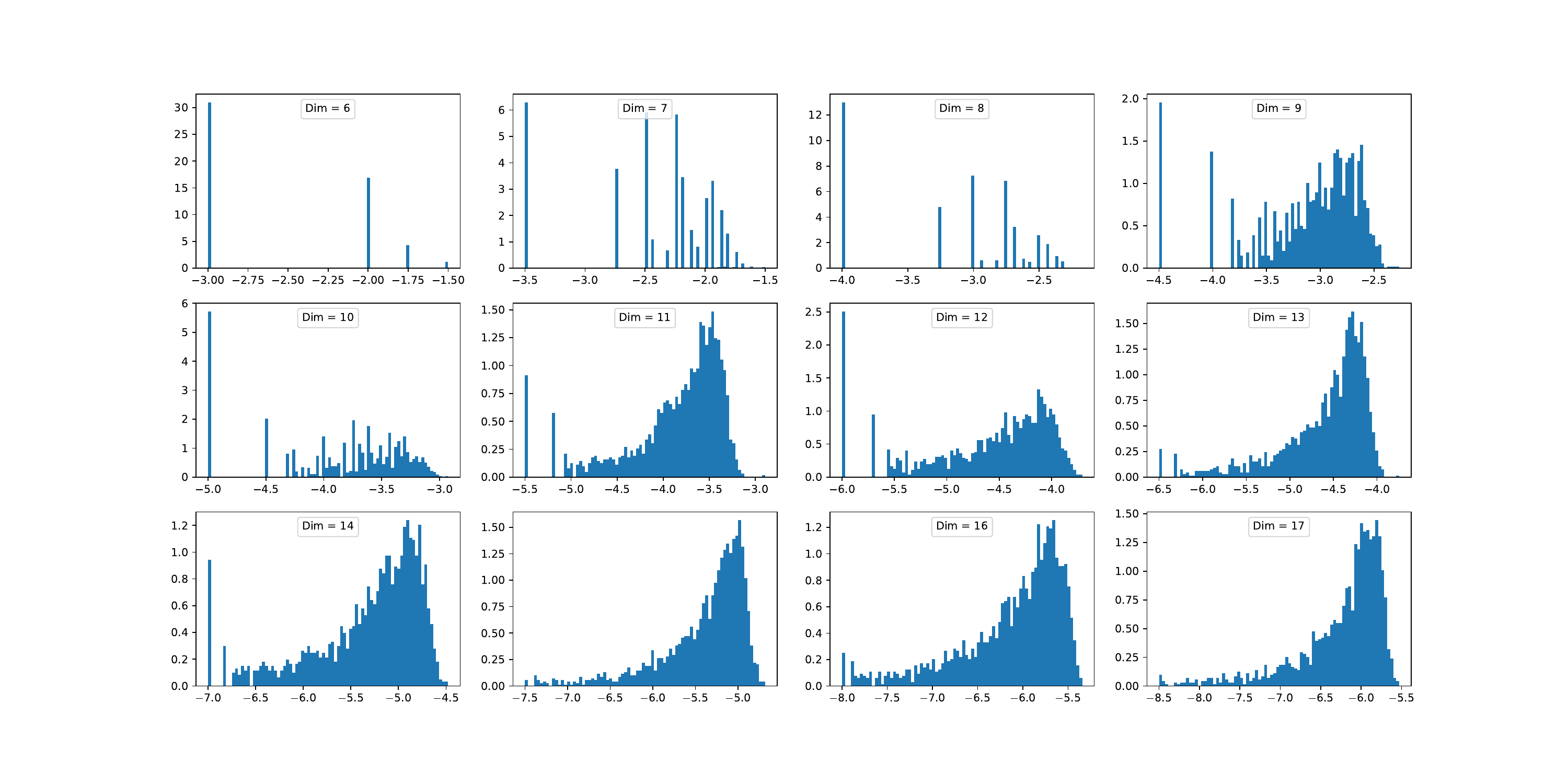}
    \caption{Sample distribution of average $\mathcal{H}_\infty(d)$ per site for different dimensions}
    \label{fig:Figure5}
\end{figure}
\\\newline
Finally, \cref{fig:Figure6} shows $\overline{f(d)}$, the mean number of flips per site vs.~dimension (omitting blinkers). It's worth noting that $\overline{f(d)}$ is small and grows slowly in both even and odd dimensions, suggesting that most spins on the hypercube require at most a single flip (with many having none at all) to reach their final states. $\overline{f(d)}$ in even dimensions is larger than in odd dimensions; this is because of the existence of `ties' (i.e, $\Delta E_i = 0$) in even dimensions. It then makes sense that it generally requires a greater number of flips for even-dimensional hypercubes to enter a final state.
\begin{figure}[h!]
    \centering
    \includegraphics[scale=0.40]{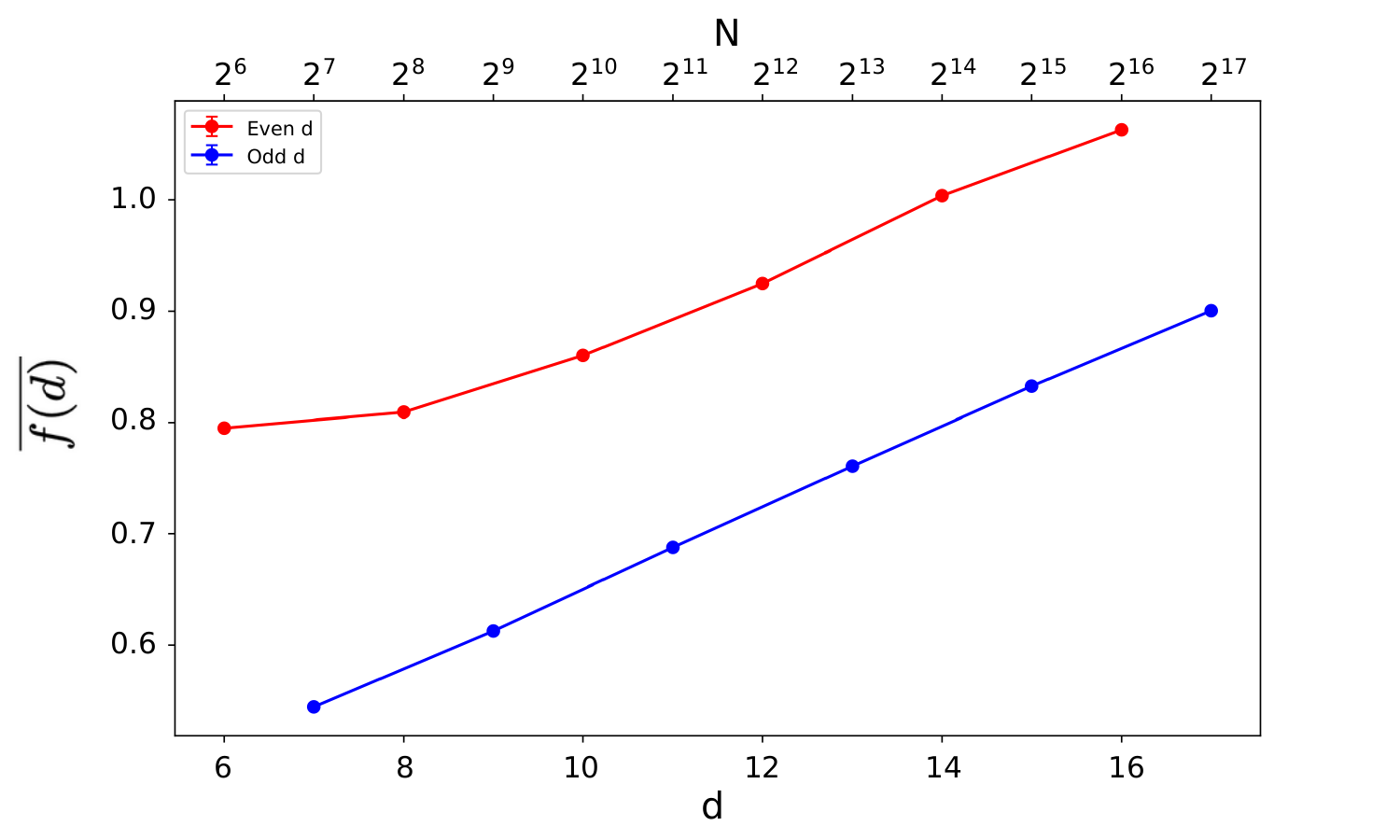}
    \caption{$\overline{f(d)}$ at different dimensions}
    \label{fig:Figure6}
\end{figure}\\\newline
Many substructures in $d$-dimensional hypercubes can be frozen if fully magnetized (i.e., all spins are of the same sign): these are the $k$-dimensional subcubes, unions of $k$-dimensional subcubes (with $k\geq \lfloor \frac{d}{2}\rfloor+1$), and more generally, $k$-cores. During the dynamics, once such a structure is fully magnetized its size is nondecreasing. These various structures lead to many frozen states on hypercubes. As $d\rightarrow\infty$, their number increases rapidly  (roughly on the order of $2^{\sqrt{N}}$, see Section 4.1 and \cref{thm:2}). The reason behind   $\lim_{d\rightarrow\infty}\mathbb{P}_{\sigma(0),\omega}(\{G^d_\infty = \sigma^{d,+}\text{ or }\sigma^{d,-}\}) = 0$ is that the hypercube becomes increasingly likely to get trapped in one of the many frozen states, preventing it from reaching a ground state. It is therefore interesting to ask: What is the typical largest cluster a frozen state can have? 
\begin{figure}[h!]
    \centering
    \includegraphics[scale=0.32]{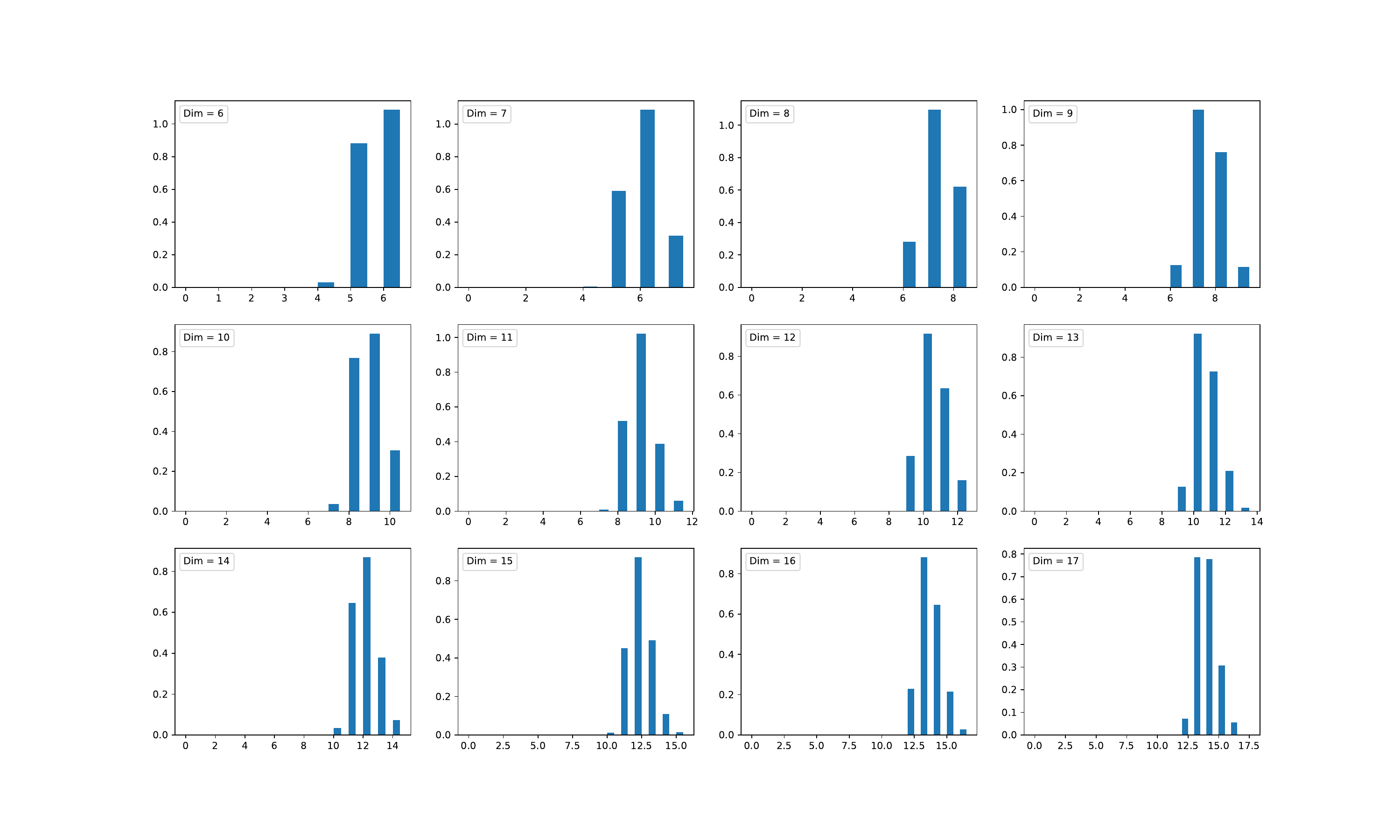}
    \caption{Sample distribution of $\kappa(d)$ at different dimensions}
    \label{fig:Figure7}
\end{figure}\\\newline
Figure \ref{fig:Figure7} shows the sample distribution of the core number of the largest $k$-core,~$\kappa(d)$ at different dimensions;  note that $\kappa(d) = d$ is equivalent to the system reaching a ground state. The distribution of $\kappa(d)$ lies in the interval 
$(\lfloor \frac{d}{2} \rfloor + 1,d)$
%JM: suggest using parentheses as above$\Bigl[\lfloor \frac{d}{2} \rfloor + 1,d\Bigr]$
, because no $k$-core with $k<\lfloor \frac{d}{2} \rfloor + 1$ is dynamically stable. As $d$ increases, the distribution concentrates around $3d/4$, suggesting that a typical largest cluster in $d$ dimensions has core number close to $3d/4$. The tails of the distribution on either side fall to $0$, corresponding on the right to $\lim_{d\rightarrow\infty}\mathbb{P}_{\sigma(0),\omega}(\{G^d_\infty = \sigma^{d,+}\text{ or }\sigma^{d,-}\}) = 0$.
\begin{figure}[h!]
    \centering
        \includegraphics[scale=0.40]{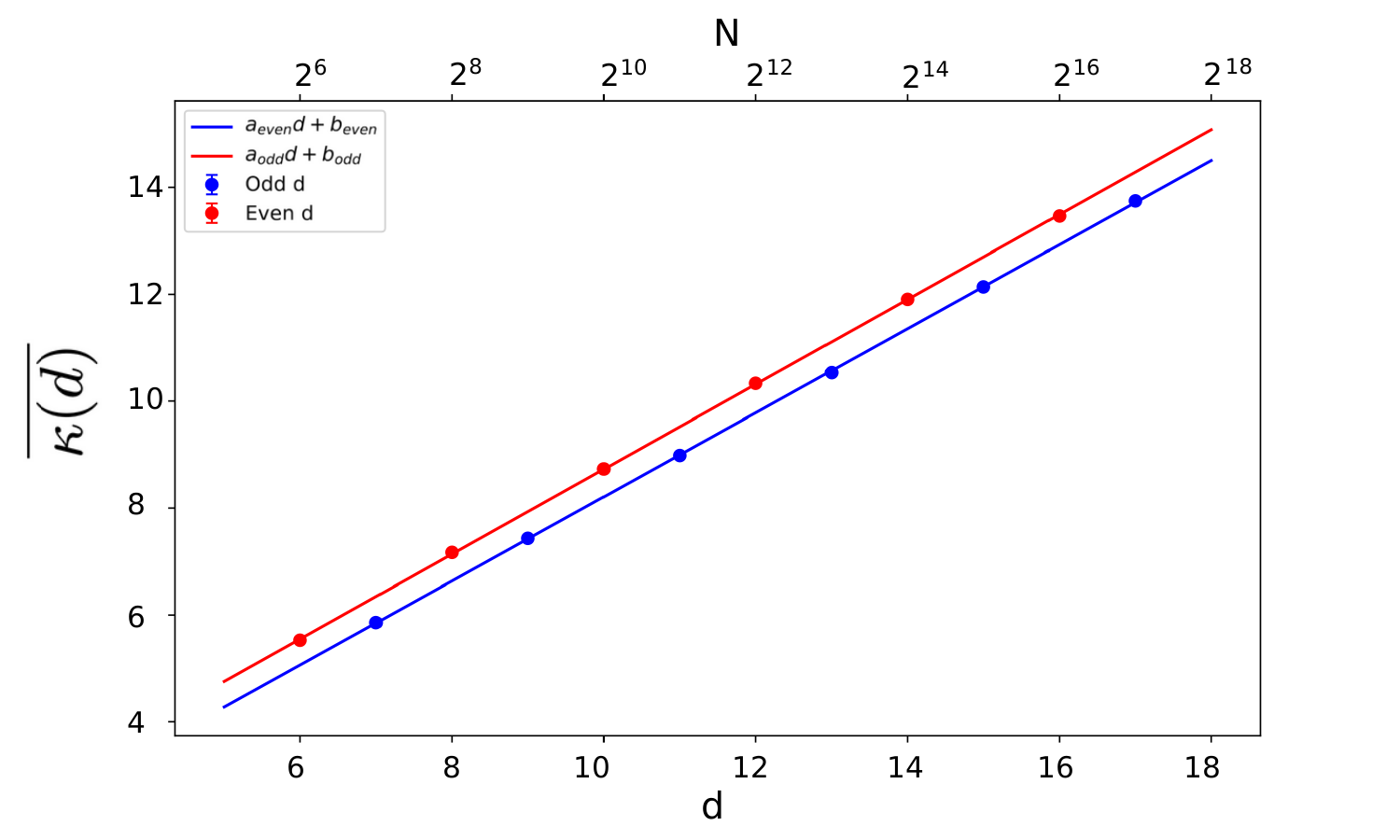}
    \caption{ Mean of $\kappa(d)$ at different dimensions with linear fits of the form $\kappa(d)=a d+ b$  and fitting parameters: $a_{even} = 0.731\pm0.001, b_{even} = 0.710\pm0.023$, $a_{odd} = 0.733\pm0.001, b_{odd} = 0.167\pm0.024$}
        \label{fig:Figure8}
\end{figure} \\\newline
\noindent We further study the sample average of the largest $k$-core in different dimensions;  \cref{fig:Figure8} shows the data together with a linear fit. Note that the average of the largest $k$-core jumps between odd and even dimensions, due to the floor function in the dynamical stability requirement $k\geq \lfloor\frac{d}{2}\rfloor + 1$. The slope parameter in both even and odd dimensions is close to $3/4$, corresponding to the concentration of the largest k-core around $3d/4$.\\\newline
We finally turn our study to blinkers, spins that flip infinitely often as $t\rightarrow\infty$, which, as noted earlier, can occur only in even dimensions. We look first at  $\mathbb{P}_{\sigma(0),\omega}(\sigma^d_i \in g^d_\infty)$, the probability for the spin $\sigma_i$ to eventually become a blinker. Figure \ref{fig:Figure9} shows its estimator $\hat{\mathbb{P}}_{\sigma(0),\omega}(\sigma^d_i \in g^d_\infty)$, which is the sample fraction of the number of blinkers in different dimensions (in units of $10^{-5}$). We see from the plot that $\hat{\mathbb{P}}_{\sigma(0),\omega}(\sigma^d_i \in g^d_\infty)$ grows slowly with dimension. Though we believe the growth is transient and it must go down to zero if the density of blinkers goes to zero, as we conjecture later. We are not yet in asymptopia because the probability of having a blinker is growing more rapidly than the conditional density is shrinking.
\begin{figure}[h!]
    \centering
    \includegraphics[scale=0.41]{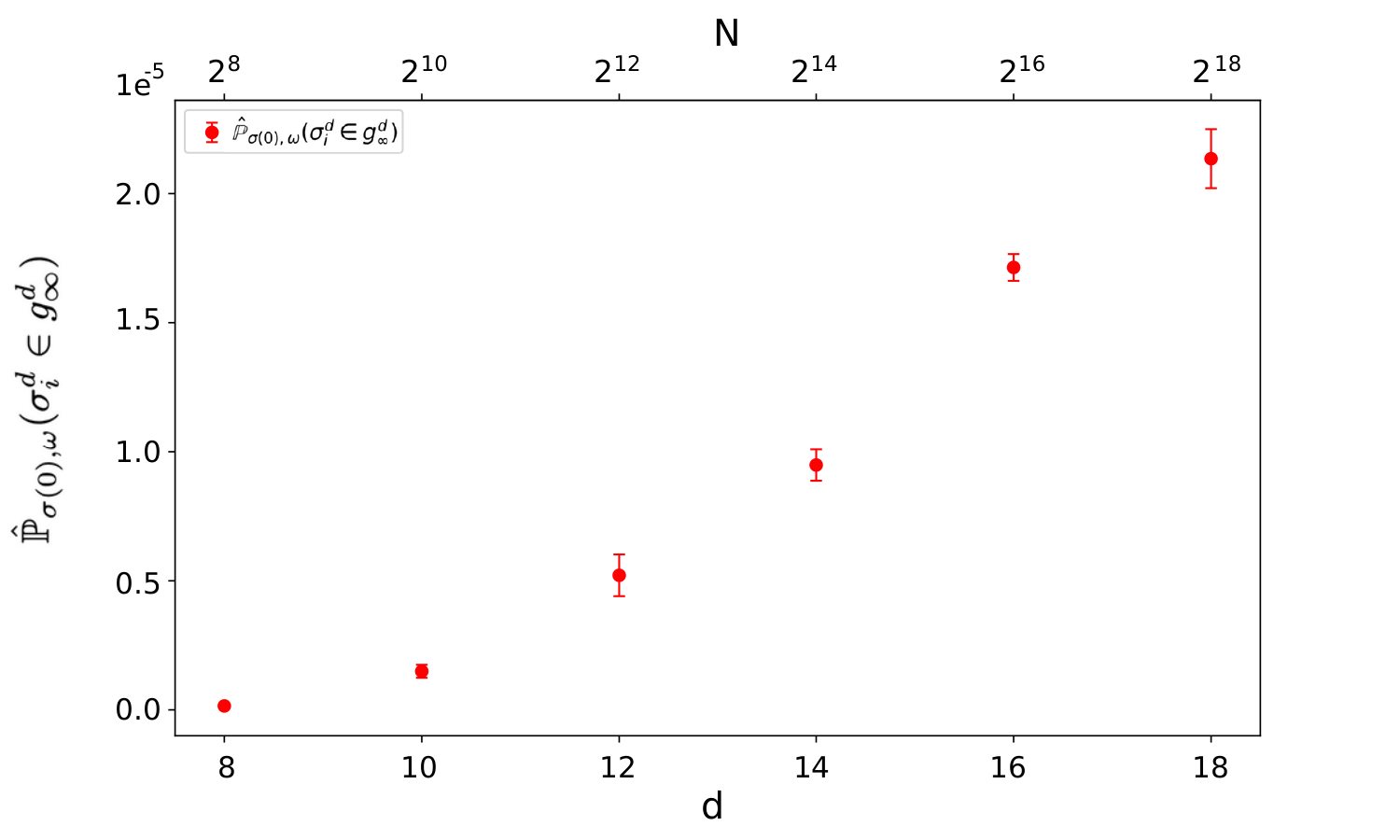}
    \caption{The probability that a spin is a blinker, $\hat{\mathbb{P}}_{\sigma(0),\omega}(\sigma^d_i \in g^d_\infty)$, at different dimensions}
    \label{fig:Figure9}
\end{figure}\\\newline
We have not observed any blinkers in dimensions $d<8$, and in Section 4 we will show that eight is in fact the lowest dimension in which blinkers can appear. The intuition behind the theoretical argument in the next section is that the existence of a blinker places certain requirements on its surrounding environment. One consequence of this is that there is an effective `repulsion' between individual blinkers, so that in lower dimensions there may only be sufficient ``room'' for a single blinker.  It is therefore useful to consider the conditional probability $\mathbb{P}_{\sigma(0),\omega}(\sigma^d_i \in g^d_\infty||G^d_\infty| > 1)$ for the spin $\sigma_i$ to eventually become a blinker conditioned on the event that there is at least one other blinker in the hypercube. \cref{fig:Figure10} shows its estimator $\hat{\mathbb{P}}_{\sigma(0),\omega}(\sigma^d_i \in g^d_\infty||G^d_\infty| > 1)$, which is the averaged fraction of the number of blinkers in different dimensions, conditioned on discarding all runs where no blinkers appear. For this estimator we find a fit of $ad^{-1}2^{-\frac{d}{2}}$. The decaying fraction of blinkers as dimension increases is an indication that the size of the 'surrounding environment' needed for an isolated blinker is growing with dimension. 
\begin{figure}[h!]
    \centering
    \includegraphics[scale=0.41]{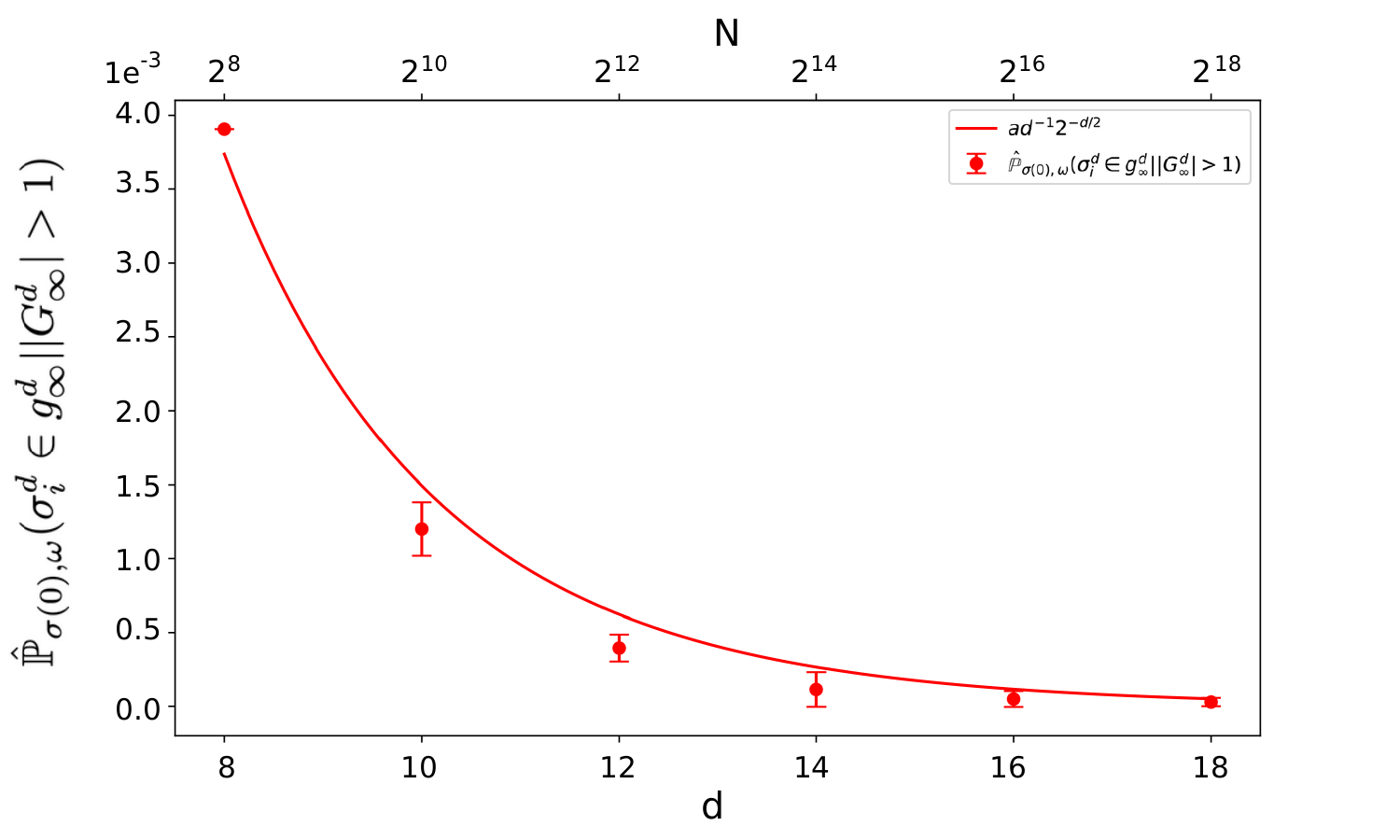}
    \caption{$\hat{\mathbb{P}}_{\sigma(0),\omega}(\sigma^d_i \in g^d_\infty||G^d_\infty| > 1)$ at different dimensions, fitting parameters: $a = 2.09\pm0.10$}
    \label{fig:Figure10}
\end{figure}\\\newline
Finally, we consider $\mathbb{P}_{\sigma(0),\omega}(|G^d_\infty| > 1)$, the probability for a hypercube to contain at least one blinker. Figure \ref{fig:Figure11} shows its estimator $\hat{\mathbb{P}}_{\sigma(0),\omega}(|G^d_\infty| > 1)$, the frequency of hypercubes that contain at least one blinker. We see that the frequency is growing rapidly with dimension, and conjecture that it approaches one as $d\rightarrow \infty$. Consequently, even-dimensional hypercubes with large $d$ should contain at least one blinker. But given the decay of $\mathbb{P}_{\sigma(0),\omega}(\sigma^d_i \in g^d_\infty||G^d_\infty| > 1)$, the density of blinkers will be small in large dimensions.
\begin{figure}[h!]
    \centering
    \includegraphics[scale=0.42]{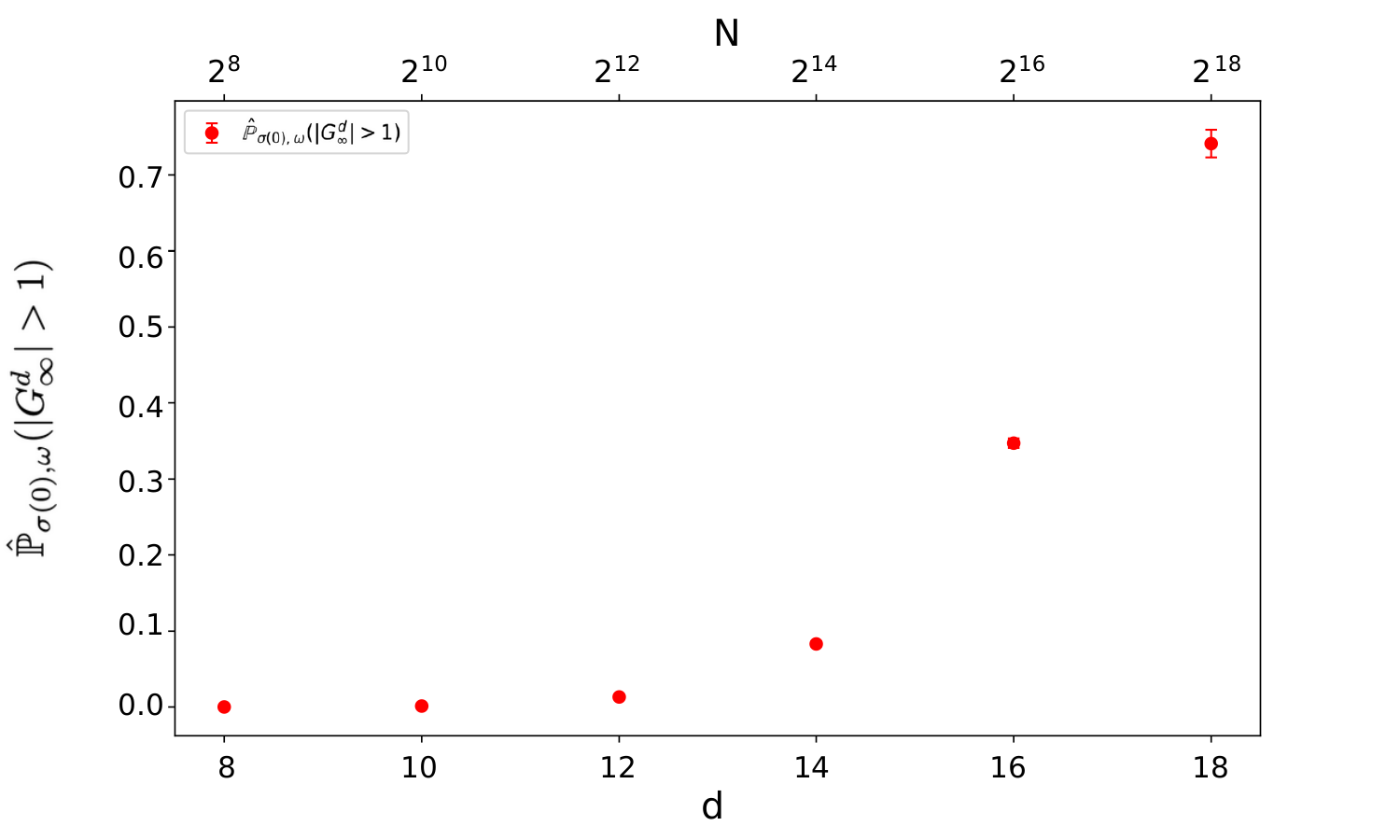}
    \caption{$\hat{\mathbb{P}}_{\sigma(0),\omega}(|G^d_\infty| > 1)$ at different dimensions}
    \label{fig:Figure11}
\end{figure}
\subsection{Double-copy Dynamics}
\noindent We next study $q_\infty(d)$, the final overlap between `identical twins' starting with the same initial configuration but evolving independently. Figure \ref{fig:Figure12} shows the sample distribution of $q_\infty(d)$ at different dimensions.
Similar to our results on the final magnetization, the sample distribution of $q_\infty(d)$ is generally broader in even dimensions. The distribution of $q_\infty(d)$ generally skews to the right, and neither concentrates around $1$ (which is the case for the diluted Curie-Weiss Model\cite{Gheissari2018}), nor concentrates around $0$; in fact, $q_\infty(d)$ appears to be concentrating around $0.5$ as $d$ grows.
\begin{figure}[h!]
    \centering
    \includegraphics[scale=0.40]{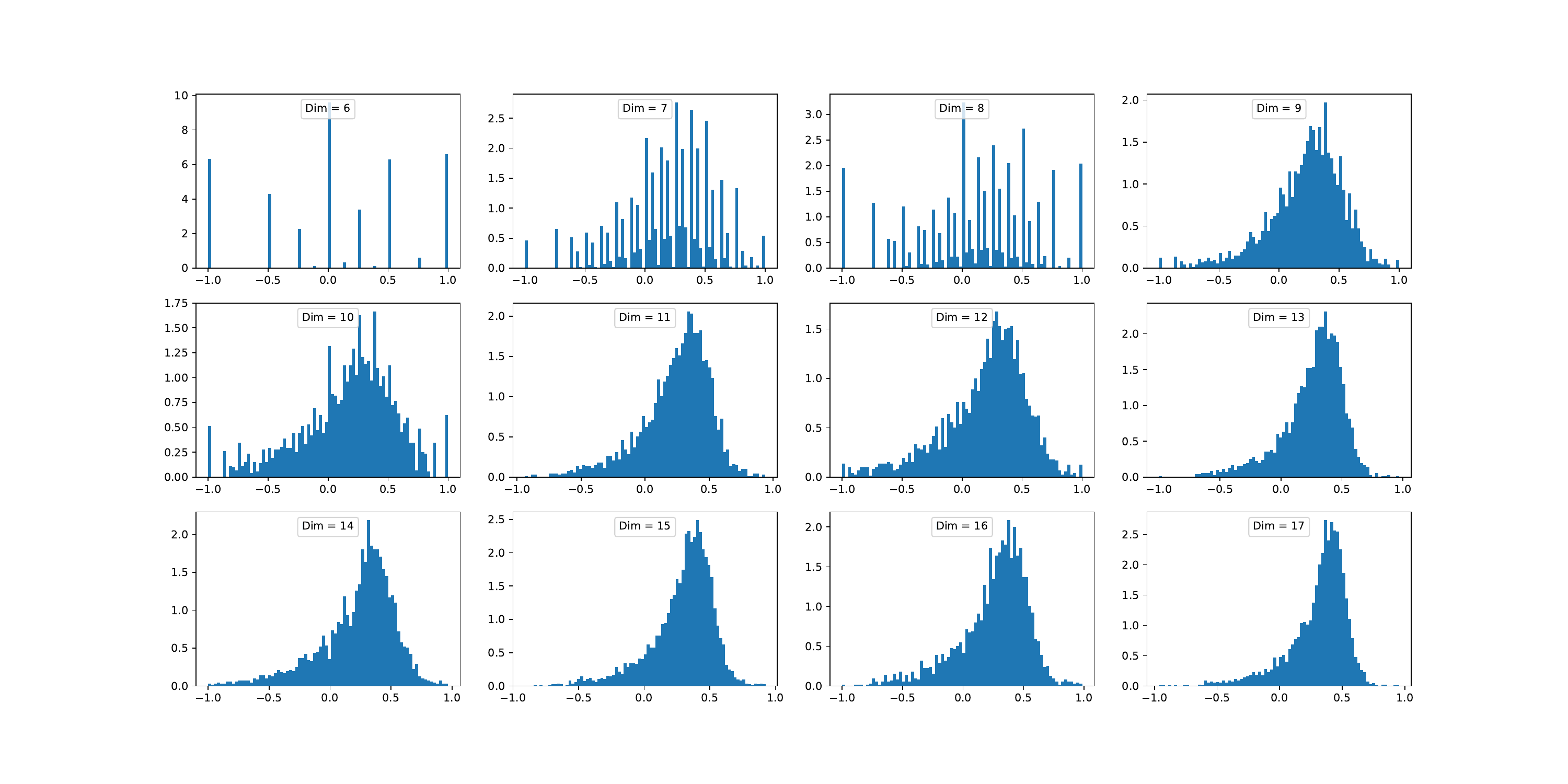}
    \caption{Sample distribution of the final overlap, $q_\infty(d)$, at different dimensions}
    \label{fig:Figure12}
\end{figure}\\\newline
To study the limit of $q_\infty(d)$ as $d\rightarrow\infty$, we consider its sample mean $\overline{q_\infty(d)}$ as a function of $d$. Again $\overline{q_\infty(d)}$ alternates between even and odd dimensions but is increasing with $d$ in both. \cref{fig:Figure13} show a fit for $\overline{q_\infty(d)}$ using a simple power law $c-ad^b$, where $0<c<1$. Clearly, there should be two power laws $c-a_{even}d^b$, $c-a_{odd}d^b$ to fit the even-odd difference, but at the same time, we should restrict them with the same asymptote $c$ as we believe such differences should vanish for large $d$. The fit suggests that $\overline{q_\infty(d)}\rightarrow 0.58\pm 0.14$, corresponding to concentration of $q_\infty(d)$ near $0.5$ as seen previously in \cref{fig:Figure12}.
\\
\begin{figure}[h]
    \centering
        \includegraphics[scale=0.42]{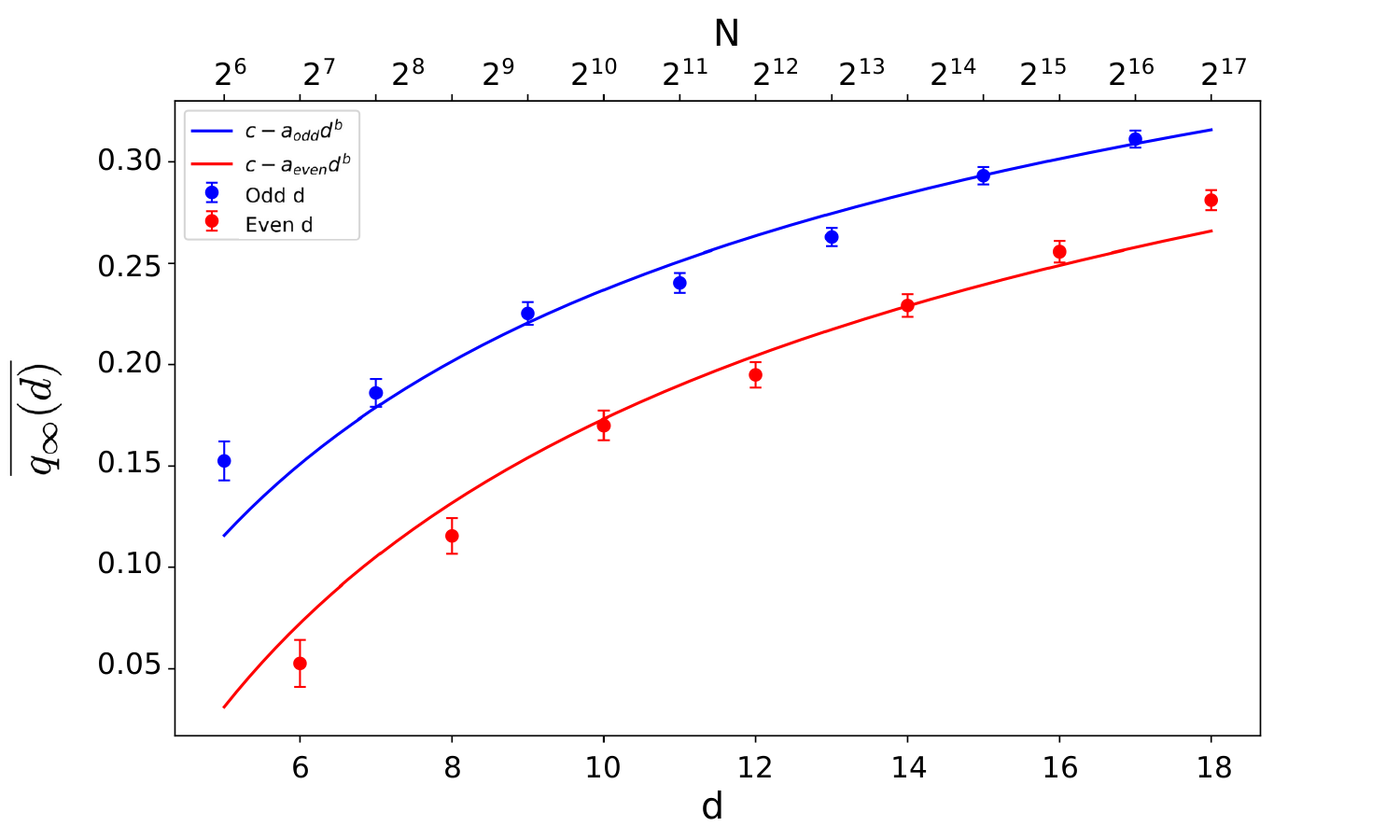}
        \caption{Sample mean $\overline{q_\infty(d)}$ at different dimensions, fitted with same power parameter $b$ in even and odd dimensions: $a_{even} = 1.13\pm0.09, a_{odd} = 0.95\pm0.04, b = -0.44\pm0.18, c=0.58\pm0.14$}
        \label{fig:Figure13}
    
\end{figure}
\section{Theoretical Results and Constructions}

\subsection{Frozen States}
\noindent We begin this section by introducing a renormalization scheme that exploits the recursive structure of the hypercube. 
\begin{definition}
A \textbf{k-renormalization} on the hypercube $Q_d$ is defined as the mapping of $Q_d$ onto a lower-dimensional hypercube $\tilde{Q}_{d-k}$, where:
\begin{enumerate}
    \item Each vertex $i$ of $\tilde{Q}_{d-k}$ corresponds to a $k$-dimensional sub-cube $Q_{k,i}$ of $Q_d$.
    \item Each edge between two vertices $i$ and $j$ in $\tilde{Q}_{d-k}$ corresponds to the $k$ edges connecting the vertices of $Q_{k,i}$ and $Q_{k,j}$. Specifically, these edges connect every site in $Q_{k,i}$ to the corresponding site in $Q_{k,j}$.
\end{enumerate}
\end{definition}
\noindent Remark: Such a renormalization is unique only up to a permutation. One can fix an arbitrary set of $d-k$ coordinates to be the vertices $j$, freeing the remaining $k$ coordinates to constitute the inside of each sub-cube $Q_{k,j}$. Hence one can have a number ${n \choose k}$ of such renormalizations.\\

\noindent We first present a construction that allows one to combine two arbitrary $d$-dimensional stable configurations into a single $(d+2)$-dimensional stable configuration.
\newline
\setcounter{theorem}{0} 
\begin{theorem}\label{thm:1}
Given any two frozen states $\sigma^{d,a},\sigma^{d,b}\in \{-1,1\}^{Q_d}$ for a $d$-dimensional ($d\geq3$) hypercube, a $2$-renormalization leads to the following two $(d+2)$-dimensional  configurations which are also frozen states --- see the figure:
\newpage

\begin{equation}
\sigma^{d+2,1}_{(i, i_{d+1}, i_{d+2})} = 
\begin{cases} 
      \sigma^{d,-}_{i} & \text{if } (i_{d+1}, i_{d+2}) = (0, 0), \\[8pt]
      \sigma^{d,a}_{i} & \text{if } (i_{d+1}, i_{d+2}) = (0, 1), \\[8pt]
      \sigma^{d,b}_{i} & \text{if } (i_{d+1}, i_{d+2}) = (1, 0), \\[8pt]
      \sigma^{d,+}_{i} & \text{if } (i_{d+1}, i_{d+2}) = (1, 1),
\end{cases}
\end{equation}

where $i\in Q_d$ and $(i_{d+1}, i_{d+2})\in\{0,1\}^2$.
\end{theorem}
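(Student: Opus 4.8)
The plan is to reduce the claim to the elementary local stability condition for this ferromagnet: a configuration is frozen precisely when every spin strictly agrees with its local field, i.e.\ $\sigma_i h_i > 0$ for all $i$, where $h_i = \sum_{\langle i,j\rangle}\sigma_j$ and the flip cost is $\Delta E_i = 2\sigma_i h_i$. Equivalently, every spin must have a strict majority of its neighbors pointing the same way as itself. I would therefore verify this single inequality at each of the $4\cdot 2^d$ sites of $Q_{d+2}$, organized by the block $(i_{d+1},i_{d+2})\in\{0,1\}^2$ to which a site belongs. The geometry driving the whole argument is the adjacency of the four blocks in the renormalized $2$-cube: block $(0,0)$ is joined to $(0,1)$ and $(1,0)$; block $(1,1)$ is joined to $(0,1)$ and $(1,0)$; and the two ``middle'' blocks $(0,1)$ and $(1,0)$ are each joined to the two ground-state blocks $(0,0)$ and $(1,1)$. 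Each site also keeps its $d$ intra-block neighbors, which are exactly the $Q_d$-neighbors of $i$.

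First I would treat the two middle blocks, which is where the construction is engineered to work. A site $(i,0,1)$ carries the spin $\sigma^{d,a}_i$ and has $d$ intra-block neighbors (carrying $\sigma^{d,a}_j$ for $j\sim i$) together with exactly two inter-block neighbors, one in block $(0,0)$ with value $-1$ and one in block $(1,1)$ with value $+1$. The key cancellation is that these two inter-block contributions sum to $0$, so the local field at $(i,0,1)$ reduces exactly to the local field of $\sigma^{d,a}$ at $i$ inside $Q_d$. Since $\sigma^{d,a}$ is assumed frozen, the stability inequality at this site is inherited verbatim from the stability of $\sigma^{d,a}$; the identical argument, using $\sigma^{d,b}$, handles block $(1,0)$.

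Next I would treat the two ground-state blocks. At a site of block $(1,1)$ all $d$ intra-block neighbors equal $+1$, and the two inter-block neighbors (lying in blocks $(0,1)$ and $(1,0)$) contribute $\sigma^{d,a}_i+\sigma^{d,b}_i\in\{-2,0,2\}$, so the local field is at least $d-2$. Because the site's own spin is $+1$, the stability margin is $\sigma_i h_i \geq d-2$, strictly positive exactly when $d\geq 3$; this is the sole place the dimension hypothesis enters. Block $(0,0)$ is handled identically via the global spin-flip symmetry of the Hamiltonian. Finally, the configuration is manifestly not a ground state, since blocks $(0,0)$ and $(1,1)$ force both signs to appear, so it is a genuine frozen state; the companion configuration $\sigma^{d+2,2}$ in the statement arises from an analogous placement of the four blocks and is settled by the same case check.

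I do not anticipate a serious obstacle: the entire argument is a four-case verification of one inequality, and no estimate is delicate. The only points requiring care are bookkeeping ones --- correctly identifying which two blocks are adjacent to each given block in the renormalized square, so that the $\pm 1$ cancellation in the middle blocks is genuine, and noting that the $d\geq 3$ requirement is needed only for the two fully magnetized corner blocks and nowhere else.
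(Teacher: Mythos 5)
Your proof is correct and follows essentially the same route as the paper's: an exact $\pm 1$ cancellation of the two inter-block neighbors for the middle blocks carrying $\sigma^{d,a}$ and $\sigma^{d,b}$, and a stability margin of $d-2>0$ for the two fully magnetized corner blocks, which is where $d\geq 3$ is used. If anything, your bookkeeping is the more careful of the two: the paper's write-up attaches the cancellation argument to the sites $(i,0,0)$ and $(i,1,1)$ and the $\Delta E^d_i - 2$ bound to the sites $(i,0,1)$ and $(i,1,0)$, i.e.\ with the cases transposed relative to the displayed definition of $\sigma^{d+2,1}$, whereas you assign each argument to the block where it actually applies.
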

\begin{figure}[h]
    \centering

    \includegraphics[width=0.4\textwidth,trim=0cm 19cm 10cm 2.1cm, clip]{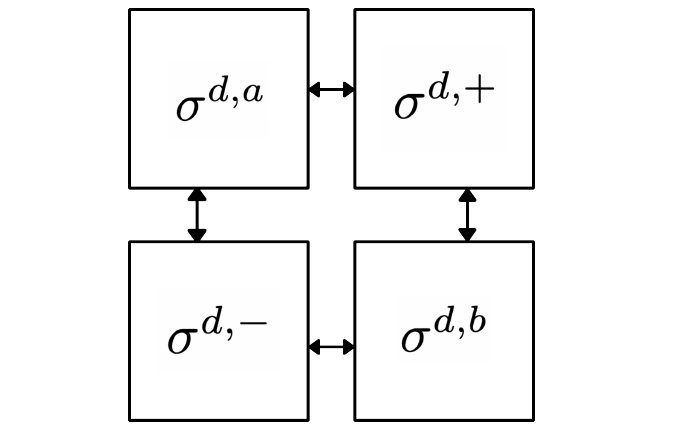}    
    \caption{Graphs for constructions in \cref{thm:1}; each square represents a d-dim hypercube, a double-headed arrow between two hypercubes means connections between all spins with the same (d-dimensional) coordinates.}
\end{figure}
\begin{proof}
We consider the energy change $\Delta E^{d+2}$ for each spin in $\sigma^{d+2,1}$. For $i\in Q_k$, if $i' = (i,0,0)$, we have $\Delta E^{d+2}_{i'} = \Delta E^d_i > 0$, since $\sigma^{d+2,1}_{i'}$ is only additionally connected to a plus spin and a minus spin compared to $\sigma^{d,a}_{i}$. Similarly, $\Delta E^{d+2}_{i'} > 0$ if $i' = (i,1,1)$. Now if $i' = (i,1,0)\text{ or }(i,0,1)$, $\Delta E^{d+2}_{i'} \geq \Delta E^d_i-2 > 0$. Hence $\sigma^{d+2,1}$ is a final state, and $\sigma^{d+2,2}$ can similarly be shown to be a final state.
\end{proof}
\noindent Now suppose that the number of $d$-dimensional frozen states is $n$. By \cref{thm:1}, there are $n$ possible choices for each of $\sigma^{d,a}$ and $\sigma^{d,b}$, so  the number of $(d+2)$-dimensional frozen states is at least $n^2$. Hence one should expect the number of frozen states to grow at least as fast as $2^{2^{d/2}} = 2^{\sqrt{N}}$. The simplest way to construct a frozen state on a $d$-dimensional hypercube $Q_d$ is as follows:
\setcounter{example}{0} 
\begin{example}\label{exp:1}
For $k\geq\lfloor \frac{d}{2}\rfloor+1$, consider the $k$-renormalization of $Q_d$ into $\tilde{Q}_{d-k}$, so that each vertex $j$ of $\tilde{Q}_{d-k}$ is a $k$-dimensional sub-cube $Q_{k,j}$. Fix $j_0\in\{0,1\}^{\tilde{Q}_{d-k}}$, set all spins in $Q_{k,j_0}$ to be plus, and all spins in $Q_{k,j_0}^c$ to be minus. The resulting configuration is a frozen state.
\end{example}
\begin{proof}
Each spin in $Q_{k,j_0}$ has $k$ plus neighbors and $d-k$ minus neighbors, hence is stable. There are $d$ sub-cubes $Q_{k,j}$ that are connected to $Q_{k,j_0}$, where the Hamming distance $d_h(j,j_0) = 1$. For each such $j$, all spins in $Q_{k,j}$ have one plus neighbor and $d-1$ minus neighbors, hence is also stable. Since the rest of the spins have only minus neighbors, this configuration is a frozen state.
\end{proof}
\noindent We now provide a lower bound for the number of frozen states, in a manner similar to \cref{exp:1}. 
\begin{theorem}\label{thm:2}
Let $\mathcal{F}^d$ denote the set of all frozen spin configurations $\{\sigma^d\text{ frozen}\}$ for $Q_d$. Then  $|\mathcal{F}^d|\succeq2^{\sqrt{N}}\, .$
\end{theorem}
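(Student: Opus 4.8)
The plan is to prove the bound directly, by exhibiting an explicit exponentially large family of frozen states in the spirit of \cref{exp:1}. Fix $k=\lfloor d/2\rfloor+1$ and perform a $k$-renormalization of $Q_d$ onto $\tilde{Q}_{d-k}$, whose $2^{d-k}$ vertices are the $k$-subcubes $Q_{k,j}$. To each vertex $j$ assign a sign $s_j\in\{+,-\}$ and set every spin of $Q_{k,j}$ equal to $s_j$; call the resulting ``block configuration'' $\sigma^d(s)$. I would first show that every $\sigma^d(s)$ other than the two ground states is frozen. Each spin of $Q_{k,j}$ has exactly $k$ neighbours inside its own monochromatic subcube, all aligned, and $d-k$ neighbours in adjacent subcubes. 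Writing $p$ for the number of like-signed adjacent blocks, its energy change is $\Delta E_i=2(2k-d)+4p\ge 2(2k-d)>0$, since $2k-d\ge 1$ by the choice of $k>d/2$. This is exactly the stability computation of \cref{exp:1}, now applied to an arbitrary block pattern rather than to a single plus block, so the sign of the neighbouring blocks is irrelevant and every mixed-sign $\sigma^d(s)$ is frozen.

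Since distinct sign assignments $s$ yield distinct configurations (one reads off $s_j$ as the common sign of $Q_{k,j}$), the number of frozen block configurations is $2^{2^{d-k}}-2$, subtracting the all-plus and all-minus ground states. I would then simplify the exponent: with $k=\lfloor d/2\rfloor+1$ one has $d-k=\lceil d/2\rceil-1$, so $2^{d-k}\ge 2^{d/2-1}=\tfrac12\sqrt N$ and hence $|\mathcal{F}^d|\ge 2^{\sqrt N/2}-2$. This gives $\log_2|\mathcal{F}^d|=\Omega(\sqrt N)$, which is the asserted order $|\mathcal{F}^d|\succeq 2^{\sqrt N}$. The same order can be reached through the recursion $|\mathcal{F}^{d+2}|\ge|\mathcal{F}^d|^2$ implicit in \cref{thm:1}: the map $(\sigma^{d,a},\sigma^{d,b})\mapsto\sigma^{d+2,1}$ is injective because one recovers $\sigma^{d,a}$ and $\sigma^{d,b}$ by restricting to the $(0,1)$ and $(1,0)$ quadrants, and iterating from any base dimension with at least two frozen states yields $2^{\Omega(\sqrt N)}$ as well.

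The stability check and the counting are routine; the delicate point is the constant in the exponent. The construction above produces only $2^{c\sqrt N}$ with $c=\tfrac12$ for even $d$, because forcing every subcube to be monochromatic with $k>d/2$ wastes a factor of two in the number of free bits. To push $c$ up to (and past) $1$, and thereby match $2^{\sqrt N}$ with the constant the heuristic predicts, I would instead take $k=\lfloor d/2\rfloor$, for which the stability condition on each block becomes $p\ge 1$; a block configuration is then frozen precisely when no block is completely surrounded by opposite-signed blocks. A first-moment computation on a uniformly random $\pm$ colouring of $\tilde{Q}_{d-k}$ shows that each vertex is ``surrounded'' with probability $2^{-(d-k)}$, so the expected number of surrounded blocks is $2^{d-k}\cdot 2^{-(d-k)}=1=O(1)$; a Janson- or second-moment-type estimate then gives that a positive fraction of all $2^{2^{d-k}}$ colourings have no surrounded block, producing $2^{\Theta(\sqrt N)}$ frozen states with $c\ge 1$ in both parities. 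This concentration-of-colourings step, rather than any single stability inequality, is where I expect the real work to lie; if one only needs the order $2^{\Omega(\sqrt N)}$, the explicit block family of the first paragraph already suffices.
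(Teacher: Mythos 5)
Your first paragraph is essentially the paper's own proof: a $(\lfloor d/2\rfloor+1)$-renormalization with an arbitrary sign assigned to each monochromatic block, each spin then having at least $\lfloor d/2\rfloor+1$ agreeing neighbors, yielding $2^{2^{d/2-1}}=2^{\sqrt{N}/2}$ frozen states --- exactly the count (and the constant) the paper itself obtains, its $\succeq$ being read up to constants in the exponent. The extra machinery you propose for pushing the constant up to $c\ge 1$ via $k=\lfloor d/2\rfloor$ blocks and a second-moment argument is plausible but unnecessary for the stated result and does not appear in the paper.
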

\begin{proof}
For the sake of simplicity, assume $d$ even. We first $(\frac{d}{2}+1)$-renormalize $Q_d$ into $\Tilde{Q}_{\frac{d}{2}-1}$, so that each vertex $j$ of $\tilde{Q}_{\frac{d}{2}-1}$ is a ($\frac{d}{2}+1$)-dimensional sub-cube $Q_{\frac{d}{2}+1,j}$. For each $j$, set the spins in $Q_{\frac{d}{2}+1,j}$ to be either all plus or all minus. Then each spin in $Q_{\frac{d}{2}+1,j}$ will have at least $\frac{d}{2}+1$ neighbors that agree with it, and it is therefore frozen. In total there are $2^{\frac{d}{2}-1}$ such sub-cubes, hence there are at least $2^{2^{\frac{d}{2}-1}}$ frozen configurations.
\end{proof}
\noindent But the construction of \cref{thm:2} does not yield all possible frozen configurations. Any structure that is a $k$-core is frozen, hence as long as $k$ satisfies the dimension requirement $k\geq\lfloor \frac{d}{2}\rfloor+1$, in principle any configuration in which the set of plus spins (and similarly minus spins) is a union (not necessarily disjoint) of sub-cubes $Q_k$ would be a frozen configuration. One may then ask, do there exist frozen configurations that differ from those of \cref{thm:2}? The answer is yes, and such frozen configurations can occur in surprisingly low dimensions. 
\setcounter{example}{1} 
\begin{example}\label{exp:2}
On a $5$-dim hypercube $Q_5$, set all the spins on the sites in \cref{fig:Figure16} to be plus, and set all the remaining spins to be minus. Then this is a frozen configuration, and the set of plus
sites contains no 3-cube as a subset and similarly,
the set of minus sites contains no 3-cube as a subset.
\begin{figure}[h!]
    \centering
    \includegraphics[width=0.8\linewidth, trim=0cm 17cm 0cm 0cm]{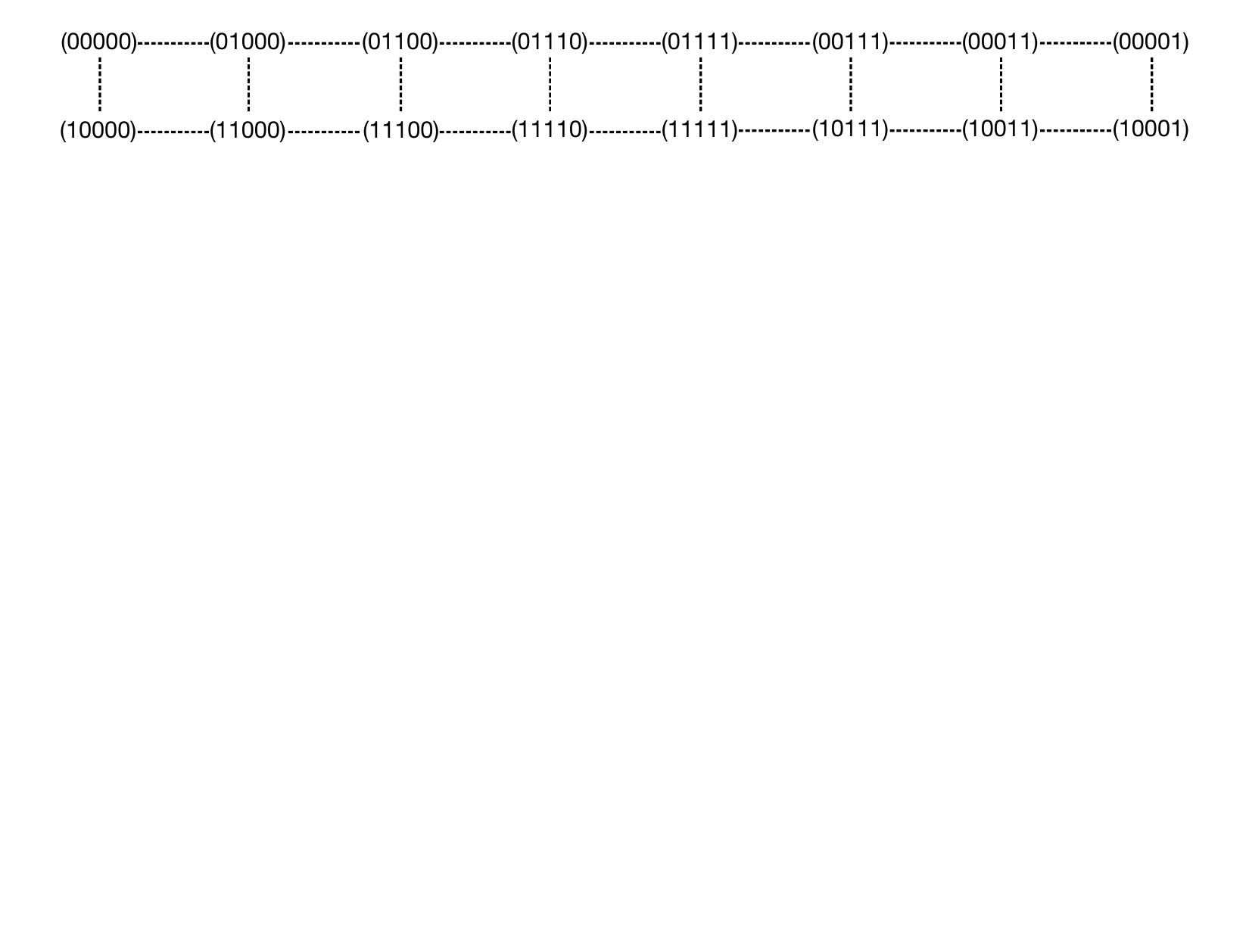}
    \caption{Coordinates of the plus spins in $Q_5$ that contains no 3-cubes}
    \label{fig:Figure16}
\end{figure}
\end{example}
\noindent The graph of the plus spins can be viewed as a ring of eight 2-site sets. In fact, its complement, the graph of the minus spins, shares exactly the same structure (See \cref{fig:Figure17} for a visualization). Both substructures are frozen because each spin has three neighbors it agrees with. Hence we have a frozen configuration in which the set of plus (minus) spins is a union only of $Q_{\lfloor \frac{d}{2}\rfloor}$'s.\\\newline
We can also do a similar $2$-renormalization on $Q_6$, but instead, we will have $Q_{2,j}$ on each vertex $j$ of $\Tilde{Q}_4$. This is a frozen state containing no $4$-dimensional hypercubes. This type of construction begins to fail on $Q_7$, which contains $4$-cubes.\\
\begin{figure}[h]
    \centering
    \includegraphics[width=0.5\linewidth, trim=0cm 17cm 0cm 0cm]{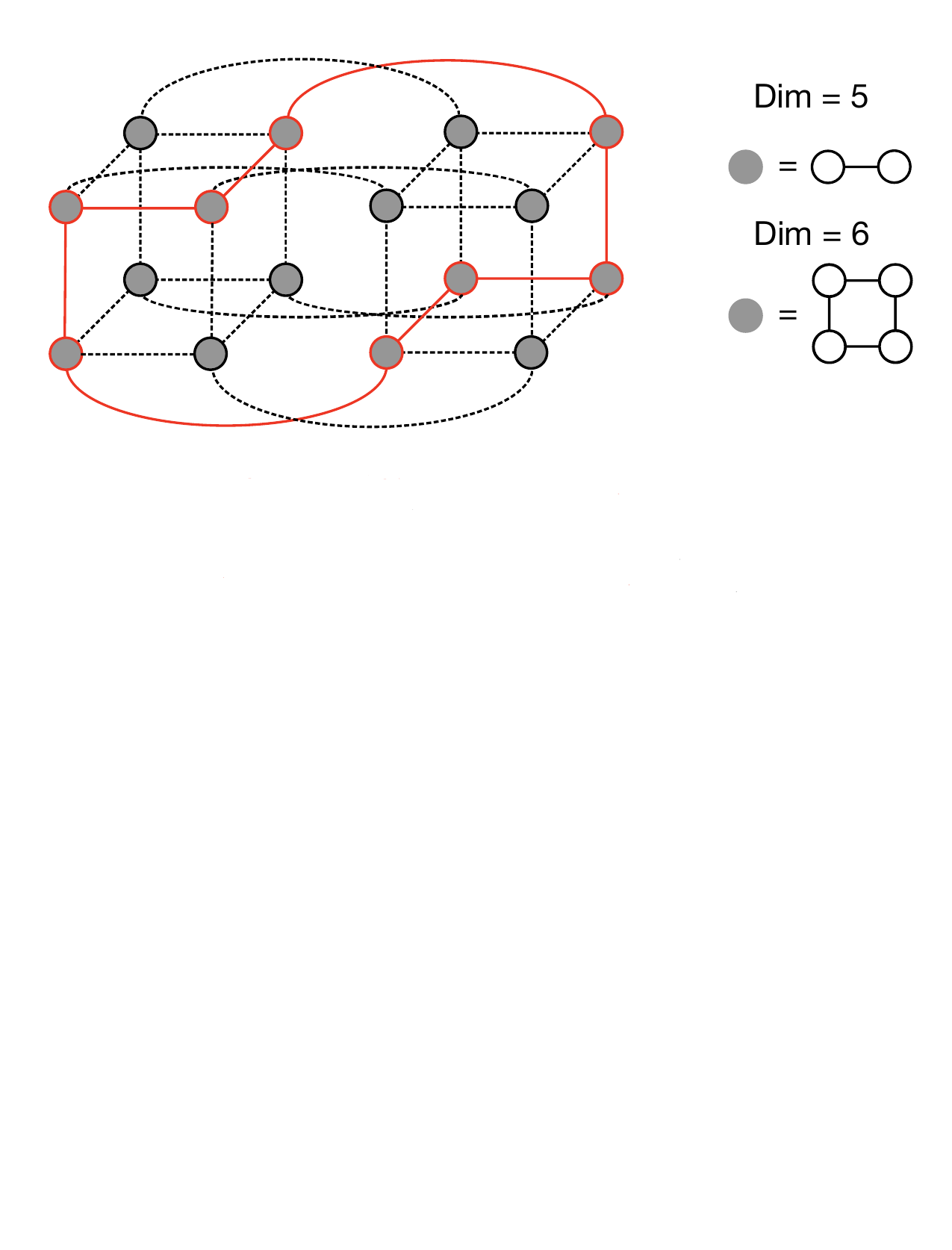}
    \caption{Visualization for $Q_5$ in \cref{fig:Figure16} $1$-renormalized into $\Tilde{Q}_4$, with a $1$-dim sub-cube $Q_{1,j}$ on each vertex $j$. A red-filled circle stands for a $1$-dim plus sub-cube (i.e.: two connected points), and a black-filled circle stands for a $1$-dim minus sub-cube. The red edges are the edges between plus spins, and the dashed edges are the other edges. The red loop is the structure described in \cref{fig:Figure16}. Similarly, one can form a black loop only using the minus spins.} 
    \label{fig:Figure17}
\end{figure}\\
\noindent Suppose now we want to construct a frozen configuration in which the set of plus (minus) spins is a union only of $Q_{\lfloor \frac{d}{2}\rfloor-a}$'s. Because the dimension in each sub-cube is lower, we  require a higher-dimensional structure connecting them. We first consider the $Q_{\lfloor \frac{d}{2}\rfloor-1}$ case, in which we choose a two-dimensional torus as an analog of the ring structure in \cref{exp:2}:
\setcounter{example}{2} 
\begin{example}\label{exp:3}

\begin{figure}[h!]
    \centering
    \includegraphics[width=0.9\linewidth, trim=0cm 10cm 0cm 2cm]{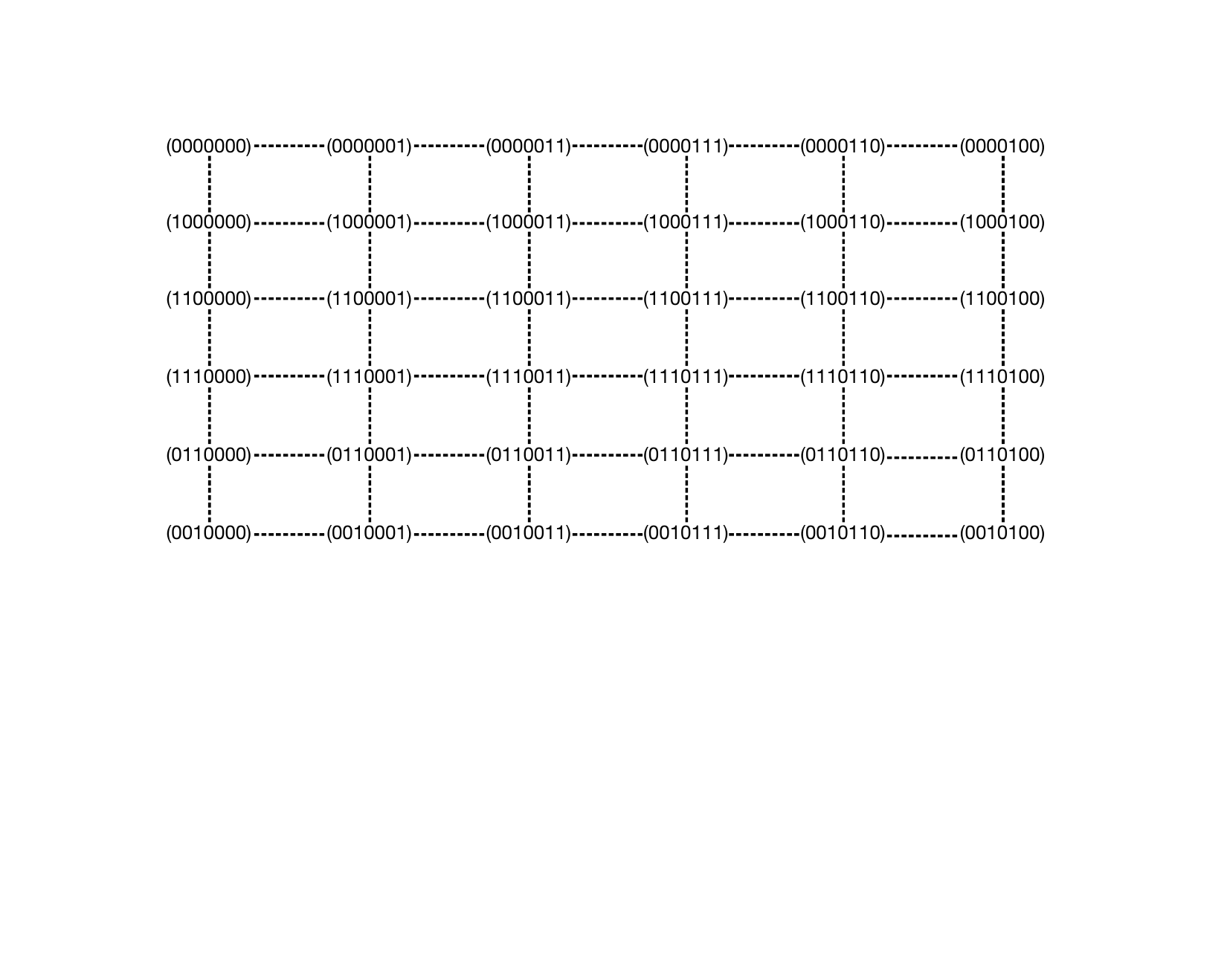}
    \caption{Coordinates of the plus spins in $Q_7$ that contains no 4-cubes}
    \label{fig:Figure18}
\end{figure}
On a $7$-dim hypercube $Q_7$, set all the spins on the sites in \cref{fig:Figure18} to be plus, and set all the remaining spins to be minus. Then this is a frozen configuration, and the set of plus
sites contains no $3$-cube as a subset.\\
Remark: In contrast to \cref{exp:2}, the set of minus spins here contains $4$-cubes.
\end{example}
\noindent Similarly as before, we can do a renormalization on $Q_8$, and assign each site in \cref{fig:Figure18} to be a $d=1$ sub-cube $Q_{1,j}$, which is frozen with no $4$-cubes contained.
So in principle, in order to construct a frozen state in which the set of plus (minus) spins is a union only of $Q_{\lfloor \frac{d}{2}\rfloor-a}$'s, we can choose the structure connecting them to be an $(a+1)$-dimensional torus embedded in $Q_d$. Finally, to ensure a successful embedding, let $l_i$ be the length of the torus in the $i$-th dimension.  We require~\cite{Havel1972}
\begin{equation}
    \sum_i^{a+1}\lfloor\log_2(l_i)\rfloor\leq d\, .
\end{equation}

\subsection{Blinker States}
\noindent We begin this section by describing an initial configuration consisting of a union of sub-cubes that must yield a blinker as a final state. For simplicity, we confine the discussion to eight dimensions, the lowest dimension where we observed blinker configurations; but a similar construction can be applied to any even $d\geq8$. We will assume the blinker resides at the origin~$\Vec{0}$, where it must have $4$ plus neighbors and $4$ minus neighbors. We assume spins at the nearest-neighbor sites $(1000|0000)$, $(0100|0000)$, $(0010|0000)$, $(0001|0000)$ to be plus, and spins at the sites  $(0000|1000)$, $(0000|0100)$, $(0000|0010)$, $(0000|0001)$ to be minus (the vertical line in these coordinates is for clarity to separate the first and last $d/2$ coordinates).\\\newline
The construction then proceeds as follows: at $t=0$, for each plus neighbor choose a $5$-dimensional sub-cube that includes the plus neighbor and make all its spins plus; for each minus neighbor, choose a $5$-dimensional sub-cube including the minus neighbor and make all its spins minus. These sub-cubes need to be chosen such that the sets of plus and minus spins are disjoint. The spin values in the complement of the union of these eight $5$-dimensional sub-cubes can be arbitrary.  Now let the above initial configuration evolve through using the usual Glauber dynamics. Since a fully magnetized sub-cube of dimension greater than $d/2$ is frozen, the union of plus sub-cubes and the union of minus sub-cubes each remains frozen for all times, so as $t\rightarrow\infty$, the spin at $\Vec{0}$ becomes a 1-blinker. \\\newline
Next we give an explicit construction of  disjoint sets of plus and minus sub-cubes each of which contain a neighbor of the origin. We use the notation * to indicate that the given coordinate can be either $0$ or $1$, so e.g.~a (*1**$|$**00) is a $5$-dimensional sub-cube embedded in $Q_8$. Here is one possible configuration of four plus $5$-dimensional sub-cubes and four minus $5$-dimensional sub-cubes that have no overlap:
\[
\text{plus $5$-dim sub-cubes:  (1***$|$00**), (*1**$|$**00), (**1*$|$00**), (***1$|$**00)}; 
\]
\[
\text{minus $5$-dim sub-cubes:  (*0*0$|$1***), (*0*0$|$*1**), (0*0*$|$**1*), (0*0*$|$***1)}.
\]
A fully explicit construction of a blinker is obtained setting the complement of the union of the four plus $5$-dimensional sub-cubes above to be have all minus spins.
%, we gain a blinker configuration at $t = 0$, again with a 1-blinker at $\Vec{0}$.
This is because the complement, except for the origin, is also a union of minus $5$-dimensional sub-cubes, and thus is also frozen. A more detailed proof is given in the \cref{app:constructblinker}.\\\newline
Clearly, the existence of blinkers depends on their local environment, especially their nearest and next-nearest neighbors, and so is strongly dependent on dimension. It is natural to ask what is the lowest dimension in which a single-site blinker can exist?
\setcounter{theorem}{2} 
\begin{theorem}\label{thm:3}
There are no blinker configurations on a hypercube $Q_d$ when $d<8$.
\end{theorem}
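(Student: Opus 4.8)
The plan is to argue by contradiction: assume a blinker exists on $Q_d$ with $d$ even and $d\le 6$, and extract a purely combinatorial constraint on its neighborhood that cannot be met. By the symmetry of the hypercube I may place a blinker at the origin $\vec 0$. Since it flips with $\Delta E=0$, at any instant when it flips it has exactly $d/2$ plus and $d/2$ minus neighbors; relabeling coordinates, take $e_1,\dots,e_{d/2}$ to be the plus neighbors and $e_{d/2+1},\dots,e_d$ the minus ones. The strategy is to turn the requirement that this half-plus/half-minus pattern persist under the dynamics into constraints on the second-neighbor sites $e_i+e_j$, and then show these constraints overdetermine the sign assignment once $d$ is small.

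First I would record the local stability that a persistent blinker forces on the frozen spins surrounding it. A neighbor $e_i$ that is plus and frozen must satisfy $\Delta E_{e_i}>0$ in every configuration of $G_\infty$, and in particular in the configurations (which are visited, since the origin is a blinker) where $\sigma_{\vec 0}=-1$. Writing its field as $h_{e_i}=\sigma_{\vec 0}+\sum_{k\ne i}\sigma_{e_i+e_k}$ and imposing $h_{e_i}\ge 2$ when $\sigma_{\vec 0}=-1$ gives $\sum_{k\ne i}\sigma_{e_i+e_k}\ge 3$. Hence among the $d-1$ second-neighbors of each plus neighbor at most $(d-4)/2$ may be minus, and symmetrically at most $(d-4)/2$ of a minus neighbor's second-neighbors may be plus.

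Next I would run a counting argument on the ``mixed'' second-neighbor sites $e_i+e_j$ with $i\le d/2<j$, of which there are $(d/2)^2$. Each such site is a second-neighbor of exactly one plus neighbor ($e_i$) and one minus neighbor ($e_j$): if it is minus it consumes part of $e_i$'s minus-budget, and if it is plus it consumes part of $e_j$'s plus-budget, so it always charges exactly one of the two budgets. Summing the per-neighbor bounds over the $d/2$ plus neighbors and the $d/2$ minus neighbors yields $(\text{\# minus mixed sites})+(\text{\# plus mixed sites})\le 2\cdot(d/2)\cdot(d-4)/2$, that is $(d/2)^2\le (d/2)(d-4)$, which simplifies to $d\ge 8$. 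For $d\in\{2,4,6\}$ this fails, so no blinker can exist; note also that the bound is tight at $d=8$, consistent with the explicit eight $5$-cube construction given above.

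The step I expect to be the real obstacle is justifying the stability bound for \emph{every} neighbor, because a priori the neighbors of a blinker need not be frozen: blinkers can cluster, and a neighbor that is itself a blinker supplies no lower bound on its field at its own flip instants. To close this gap I would reduce to a connected component $S$ of the blinker set, whose neighbors lying outside $S$ are frozen by maximality, and argue that the second-neighbor budget at a boundary site of $S$ is at least as tight as in the isolated (1-blinker) case; intuitively the effective ``repulsion'' between blinkers noted earlier means that clustering only demands more room, so the binding case is a single isolated blinker. Making this monotonicity precise --- for instance by substituting the least-favorable values for any blinker second-neighbors into the field inequality, or by an inductive discharging over $S$ and its frozen boundary --- is where most of the effort will go, after which the counting of the preceding paragraph supplies the contradiction for all $d\le 6$ simultaneously.
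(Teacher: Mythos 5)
Your argument is essentially the paper's proof: place the blinker at the origin, take its $d/2$ plus and $d/2$ minus neighbors, observe that stability of each neighbor evaluated when the origin carries the unfavorable sign forces it to recruit same-sign support from the $(d/2)^2$ mixed second-neighbor sites $e_i+e_j$, and note that each such site can serve only one neighbor; your budget inequality $(d/2)^2\leq (d/2)(d-4)$ is the dual form of the paper's count $d^2/4\geq 2d$, and both give $d\geq 8$. The one genuine difference is the obstacle you flag at the end --- that a neighbor of the blinker could itself be a blinker, weakening its stability condition from $\Delta E>0$ to $\Delta E\geq 0$ --- which the paper's proof does not address either (it implicitly treats the origin as a $1$-blinker with frozen neighbors), so your proposal is no less complete than the published argument, though neither actually closes that case.
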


\begin{proof}
Using the same setup as before, we write a site of the hypercube as a string of 0's and 1's, and assume the spin at $\Vec{0}$ is a blinker. Each nearest neighbor of the origin has a single 1 in its binary string.  Without loss of generality, we divide the string from the middle, with $d/2$ places on the left and $d/2$ places on the right. We assign plus spins to all sites with a 1 on the left (call them the L sites), and assign minus spins to all $d/2$ nearest neighbor sites with a 1 on the right (the R sites).\\\newline
\noindent There are three types of next-nearest neighbors of the origin: The first has any permutation of two 1's on the left with all the rest zeroes, which we shall call Type 1 sites; the second has any permutation of two 1's on the right with all the rest zeroes, called Type 2 sites.  Type 3 has any permutation of a single 1 on the left and a single 1 on the right with all the rest zeroes. Each Type 1 site has two neighboring L sites, thus the L sites are most stable if we assign their spins to be plus. Similarly, we assign the Type 2 sites to be minus.   \\\newline
\noindent So far each L site has $d/2 - 1$  plus neighbors and each R site has $d/2 - 1$ minus neighbors, so to be stable every L site needs at least two additional next-nearest neighbors with a plus spin, and similarly for the R sites. Since all of the Type 1 \& 2 sites have already been used, then given that each nearest neighbor requires its own distinct set of two additional neighbors having spins of the same sign, we require a minimum of $2d$ Type 3 sites. Since we know that there are $\frac{d^2}{4}$ Type 3 sites in total, we obtain the inequality $\frac{d^2}{4}\geq2d$ to ensure that there are enough next-nearest neighbors to stabilize the nearest neighbors. We then conclude that $d\geq 8$.
\end{proof}
\begin{figure}[h!]
    \centering
    \includegraphics[scale=0.40]{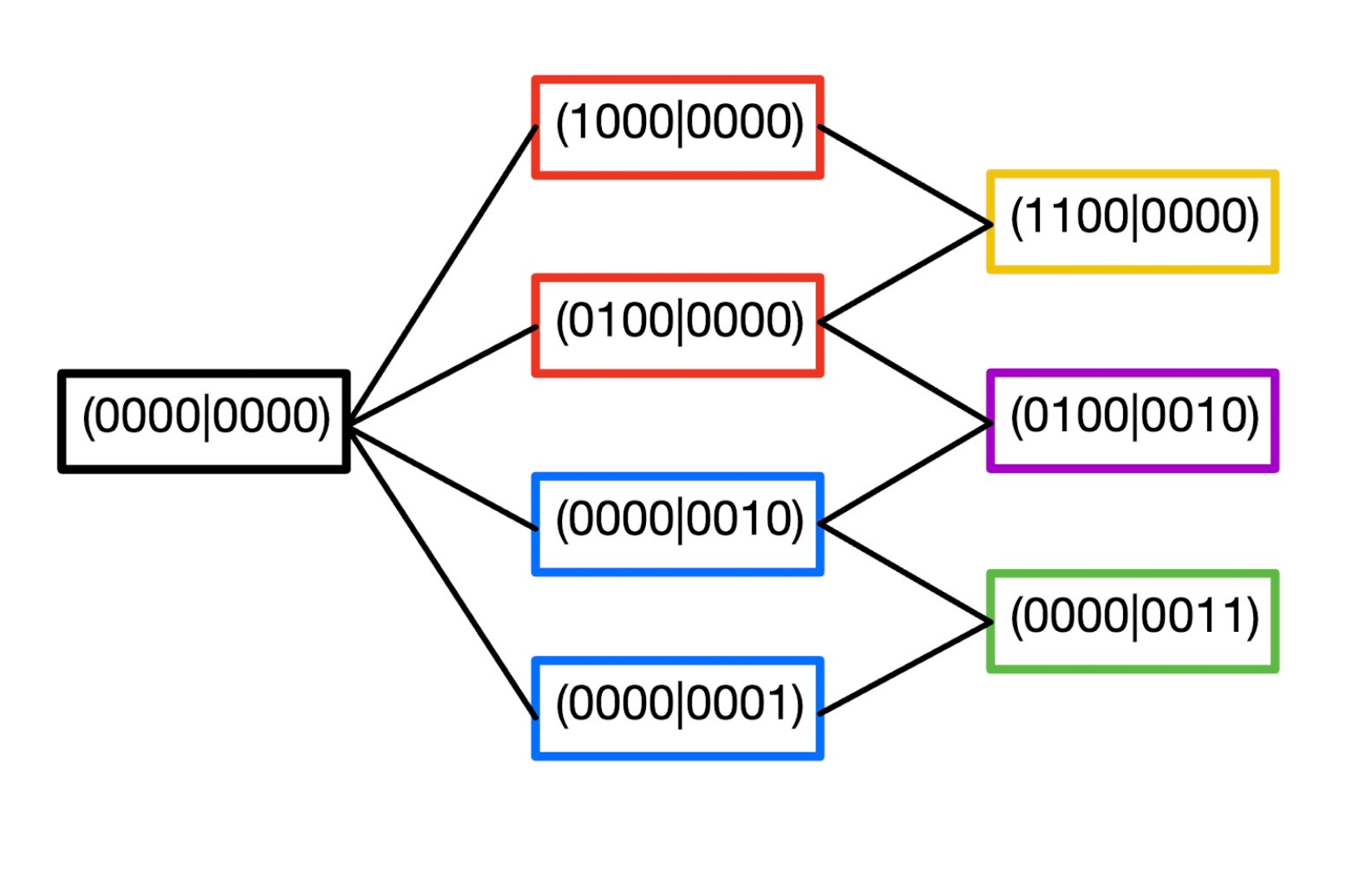}
    \caption{Counting argument for the existence of a single-site blinker. The black rectangle is the blinker, the red rectangles are L sites, the blue rectangles are  R sites, the yellow rectangle is a Type 1 site, the green rectangle is a Type 2 site, and the purple rectangle is a Type 3 site.}
    \label{fig:Figure21}
\end{figure}
%\noindent To make the argument rigorous, we also need to show that any L site must be a nearest neighbor to at least two Type 3 sites, which are not nearest neighbors to any other L site. This is easy to see because any Type 3 site is a nearest neighbor to one L site and one R site, so that a Type 3 site assigned a plus (minus) spin cannot be a nearest neighbor to more than one L (respectively R) site.
% In the hypercube $Q_d$, each site has $d$ nearest neighbors, and $d(d-1)/2$ next nearest neighbors (see \cref{fig:Figure21}). A single-site blinker must have $\frac{d}{2}$ nearest neighbors (NN) to be $+1$ and $\frac{d}{2}$  to be $-1$. To make sure each of its NN's is stable, we must require $\frac{d}{2}+1$ number of next-hearest-neighbors (NNN's) to agree with it. If the spin of the mutual NNN of two NN's with the same spin agrees with them, we denote it a ``like-NNN'', and otherwise an ``unlike-NNN''. Each NN has $\frac{d}{2}-1$ like-NNN's, hence it requires two more unlike-NNN's to be stable. Therefore in total, we require $2d$ distinct number of unlike-NNN's. The total number of unlike-NNN is $\frac{d^2}{4}$, hence we have the following inequality: $ \frac{d^2}{4}\geq 2d$, which gives $d\geq 8$.

\noindent In our simulations, we've observed connected blinkers with several structures. To appreciate these structure, we present a few examples. The simplest case is the 2-blinker.  Figure \ref{fig:Figure19} shows the local configuration of a 2-blinker and the possible transitions between its three final states. Figure \ref{fig:Figure20} shows the local configuration of a 4-site tree blinker and the dynamics of its five final states. 
\begin{figure}[h!]
    \centering
    \includegraphics[scale=0.6, trim=0cm 17cm 0cm 0cm]{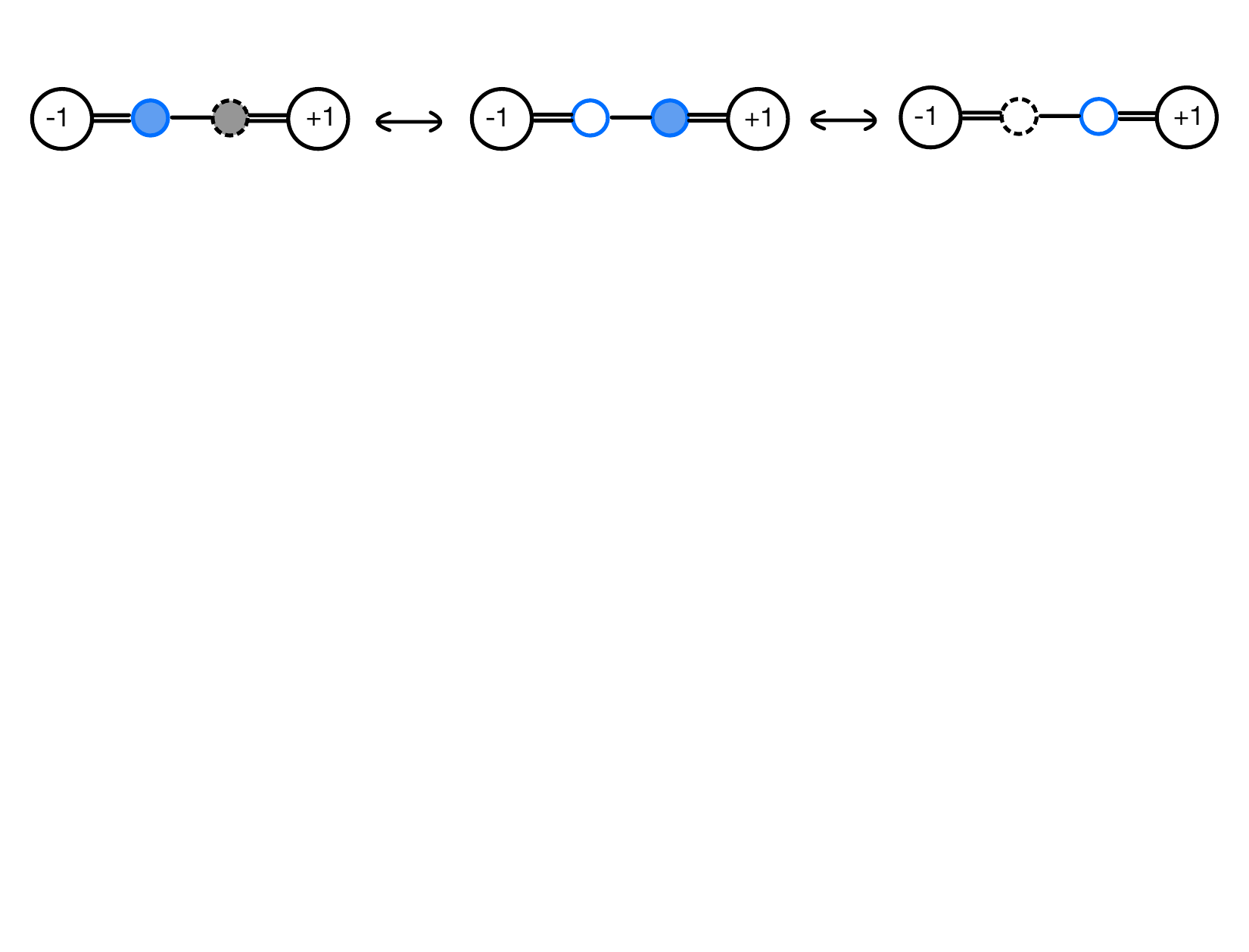}
    \caption{Illustration for the local configuration of a double-site blinker and its dynamics. Filled circles are plus spins and empty circles are minus spins. Blue circles are spins that are currently free to flip. Dashed circles are spins that are currently frozen, but can potentially become free to flip in the future. A circle containing a number describes the remaining neighbors of the attached blinker site.  The number is the net magnetization of those neighbors. Each configuration is dynamically accessible from the configuration(s) next to it.}
    \label{fig:Figure19}
\end{figure}
\begin{figure}[h!]
    \centering
    \includegraphics[scale=0.39]{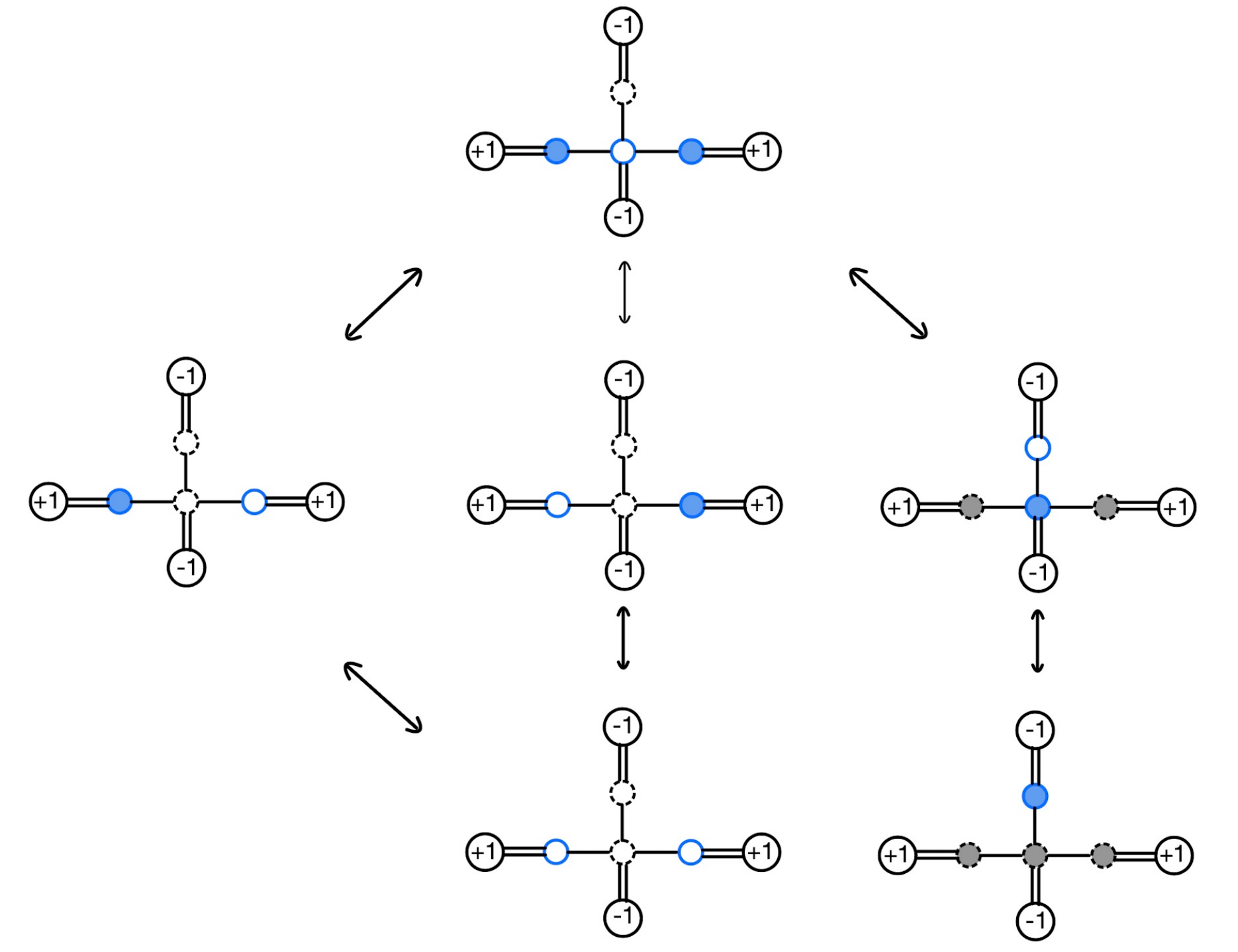}
    \caption{Illustration for the local configurations of a tree blinker and its dynamics (see \cref{fig:Figure19} for additional explanation). Each configuration is dynamically accessible by the arrow from the configuration(s) next to it.}
    \label{fig:Figure20}
\end{figure}
\begin{figure}[h!]
    \centering
    \includegraphics[scale=0.65]{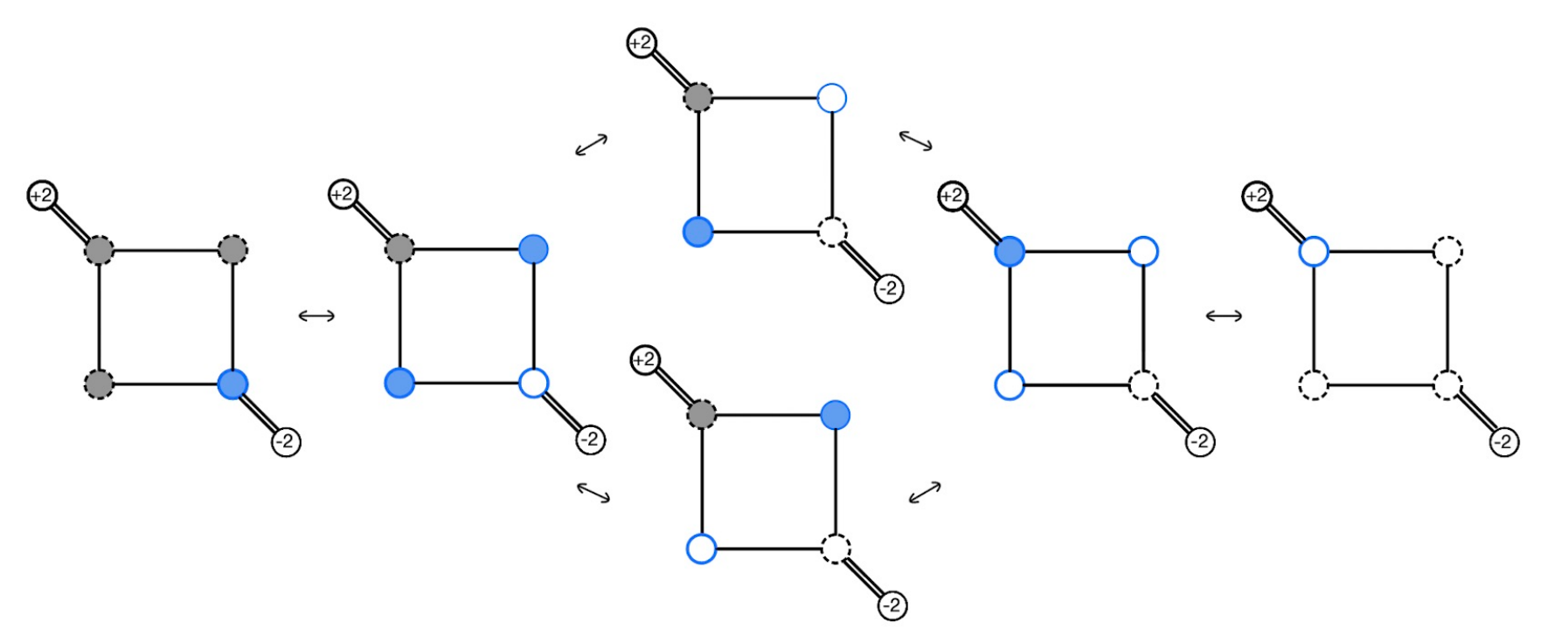}
    \caption{Illustration for the local configuration of a loop blinker and its dynamics (see \cref{fig:Figure19} for additional explanation). The net magnetizations of the neighbors  of both the upper right and lower left blinker spins is zero. Each state commutes with the state next to it.}
    \label{fig:Figure24}
\end{figure}\\
\noindent Blinker configurations can also appear as closed circuits in sufficiently high dimension. \cref{fig:Figure24} shows the structure for a simple closed-loop blinker. The top-left-corner site has $2$ more plus-neighbor spins than minus-neighbor spins, and the bottom-right-corner site has $2$ more minus-neighbor spins. The sites at the top right and bottom left have an equal number of plus-neighbor and minus-neighbor spins. One possible dynamic (see \cref{fig:Figure24}) for the loop blinker is the following. Assume all four spins are initially $+1$. The spin on the bottom left can flip with zero energy; assume it does so. Now the spins on the top right and bottom left can each perform a zero-energy flip. If both then flip to~$-1$, the top left spin can also do a zero-energy flip. Now all four spins have changed to $-1$, and the procedure can be reversed to arrive back at an all-plus configuration. 
\newpage
\noindent In \cref{exp:4} below, we present a construction of a blinker configuration containing a 2-blinker on a $10$-dimensional hypercube using a $2$-renormalization. In \cref{app:constructblinker}, we also provide an explicit construction for this configuration using disjoint sets of plus and minus sub-cubes. One can repeat this construction to obtain a blinker on a non-branching acyclic path of length $n$, and a similar approach can be used to construct a configuration containing a tree blinker.
\begin{example}\label{exp:4}
A final state with a 2-blinker $\sigma^{10,2b}$ in  the $10$-dimensional hypercube $Q_{10}$ -- see \cref{fig:Figure23}. In the diagram,  $\sigma^{8,b}$ refers to the  final state of $Q_8$  containing a single 1-blinker.  Note that this final state consists of two final configurations since the blinker site can take two spin values. The 2-blinker final state contains only three configuration.  The configuration where the upper blinker site is minus and the lower blinker site is plus is unstable but after a single flip it enters the final state.
%The final configuration $\sigma^{10,2b}$ in (4) gives a blinker configuration containing a 2-blinker on a $10$-dimensional hypercube $Q_{10}$, where $\sigma^{8,b}$ is a blinker configuration containing a 1-blinker on an $8$-dimensional hypercube.
\\
\begin{minipage}{0.37\textwidth}
\centering
\includegraphics[scale=0.6, trim=4cm 19cm 6cm 0cm]{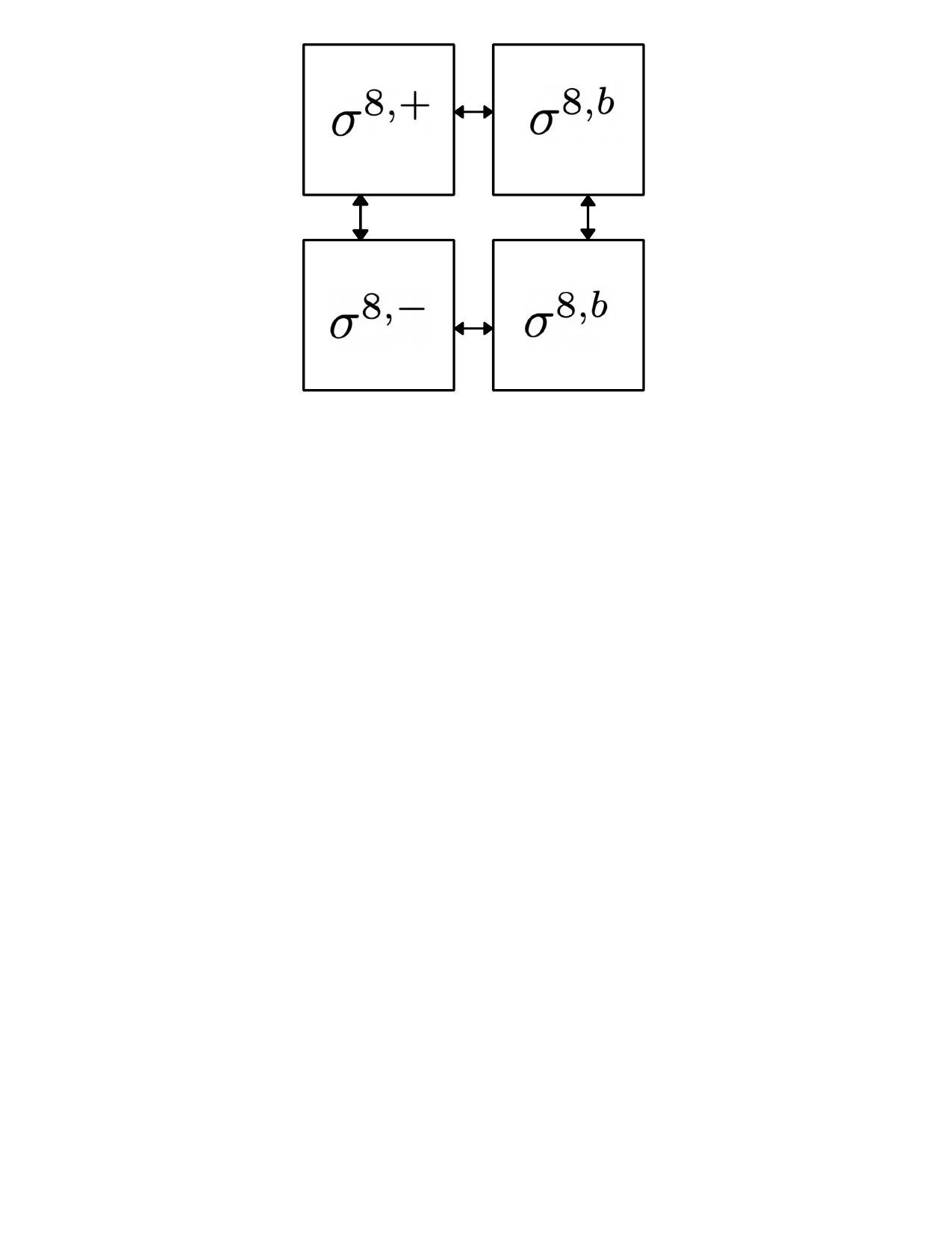}
\end{minipage}%
\hfill
\begin{minipage}{0.50\textwidth}
\begin{equation}
\sigma^{10,2b}_{(i, i_{9}, i_{10})} = 
\begin{cases} 
      \sigma^{8,+}_{i} & \text{if } (i_{9}, i_{10}) = (0, 0), \\[8pt]
      \sigma^{8,b}_{i} & \text{if } (i_{9}, i_{10}) = (0, 1), \\[8pt]
      \sigma^{8,-}_{i} & \text{if } (i_{9}, i_{10}) = (1, 0), \\[8pt]
      \sigma^{8,b}_{i} & \text{if } (i_{9}, i_{10}) = (1, 1),
\end{cases}
\end{equation}
where $i\in Q_8$, and $(i_9, i_{})\in\{0,1\}^2$
\end{minipage}
\captionof{figure}{Graphs for construction of a 2-blinker in \cref{exp:4}. Each square represents an $8$-dim hypercube, a double-headed arrow between two hypercubes means connections between all spins with the same coordinates. In particular, the 1-blinker in the upper right sub-cube is connected to the 1-blinker in the bottom right sub-cube. The plus sub-cube and the minus sub-cube are clearly stable. Except for the 1-blinker, each of the spins in the upper right sub-cube gains an extra spin aligning with it by attaching to the bottom right sub-cube, hence will remain frozen when attaching to the plus sub-cube. Similarly, each of the spins in the bottom right sub-cube will remain frozen. The 1-blinker in the upper right sub-cube gains an extra plus spin, and the 1-blinker in the bottom right sub-cube gains an extra minus spin, together forming a 2-blinker.}
\label{fig:Figure23}
\end{example}

\noindent We now present a construction (see \cref{fig:Figure22}) for a blinker configuration containing such a loop blinker on $Q_{12}$. We first pick a frozen configuration $\sigma^{10,a}$ with a specific site $i$ having $6$ plus neighbor spins and $4$ minus neighbor spins, which we give its explicit construction in \cref{app:constructblinker}. We can construct a blinker configuration $\sigma^{10,b}$ with a single blinker at site $j$ on $Q_{10}$ from a blinker configuration and an arbitrary frozen configuration on $Q_8$ using \cref{thm:1}. Using the permutation invariance of the hypercube, we can without loss of generality assume $i=j$. Then by construction (10) in~\cref{thm:1}, we can combine these two configurations into a single configuration $\sigma^{12}$ on $Q_{12}$. Now the single blinkers, the spin on site $i$, and their flipped spins form a loop blinker.\\\newline
\begin{figure}[h!]
    \centering
        \includegraphics[scale = 0.6,trim=10cm 18cm 0cm 1cm, clip]{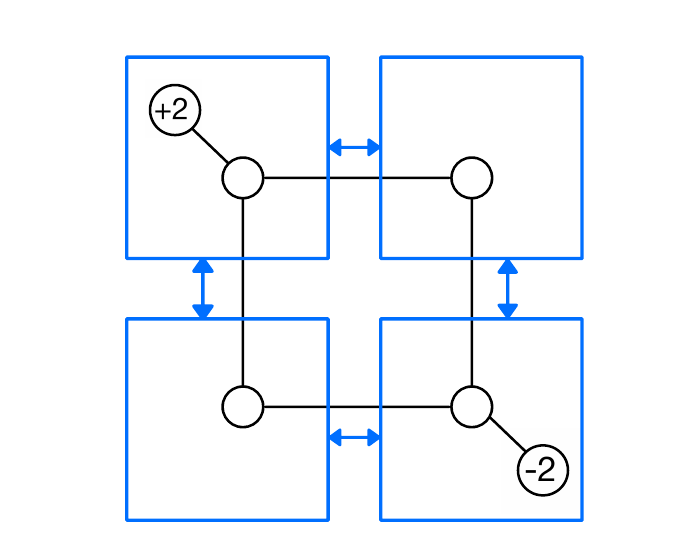}
    \caption{Construction for a blinker configuration that contains a loop blinker in $d=12$ using four $Q_{10}$'s. The upper right $Q_{10}$ is a blinker configuration $\sigma^{10,b}$ that contains a 1-blinker on site $i$. The bottom left $Q_{10}$ is the spin-flipped version of $\sigma^{10,b}$. The upper left $Q_{10}$ is a frozen configuration $\sigma^{10,a}$, with site $i$ having $6$ plus neighbors and $4$ minus neighbors. The bottom right is the spin-flipped version of $\sigma^{10,a}$.}
    \label{fig:Figure22}
\end{figure}

\section{Discussion}
\noindent We have seen that the evolution and final states of zero-temperature Ising dynamics are surprisingly rich in high dimensions. Our numerical and analytical results raise many questions, severla of which we state below as conjectures and support them with heuristic arguments. Our numerical results up to dimension 17 suggest that the final state for large $d$ is almost never a ground state.
\begin{conjecture}\label{cjt:1}
The probability for a $d$-dimensional hypercube to end up in a ground state goes to zero as $d\rightarrow\infty$:
\begin{equation}
\lim_{d\rightarrow\infty}\mathbb{P}_{\sigma(0), \omega}(\{G^d_\infty = \sigma^{d,+}\text{ or }\sigma^{d,-}\})=0.   
\end{equation}
\end{conjecture}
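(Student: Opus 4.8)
The plan is to reduce this statement about a rare ordering event to a second-moment estimate and then to a decorrelation statement for the final configuration. A ground state is exactly the event that all $N=2^d$ spins are aligned and there are no blinkers, i.e.\ $\{|M_\infty(d)|=N\}$, so by Chebyshev
\[
\mathbb{P}_{\sigma(0),\omega}(\{G^d_\infty=\sigma^{d,+}\text{ or }\sigma^{d,-}\})=\mathbb{P}(|M_\infty(d)|=N)\le \frac{\mathbb{E}[M_\infty(d)^2]}{N^2}.
\]
The zero-magnetization initial law and the update rule are both invariant under a global spin flip (which also maps blinkers to blinkers), so $M_\infty(d)$ is symmetric about $0$ and $\mathbb{E}[M_\infty(d)]=0$; hence the right-hand side is $\mathrm{Var}(M_\infty(d))/N^2$. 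It therefore suffices to prove $\mathrm{Var}(M_\infty(d))=o(N^2)$, which is precisely the behavior indicated by the $1/N$-scaled variance in \cref{fig:Figure3}.

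The next step is to exploit the symmetry of $Q_d$. Writing $\sigma_i^\infty$ for the final spin at $i$ and using that the automorphism group of the hypercube acts transitively on ordered pairs of sites at a fixed Hamming distance, the correlation $C_r:=\mathbb{E}[\sigma_i^\infty\sigma_j^\infty]$ depends only on $r=h(i,j)$. Since there are $N\binom{d}{r}$ ordered pairs at distance $r$,
\[
\mathrm{Var}(M_\infty(d))=\sum_{i,j}\mathbb{E}[\sigma_i^\infty\sigma_j^\infty]=N\sum_{r=0}^{d}\binom{d}{r}C_r,
\]
so the target $\mathrm{Var}(M_\infty(d))=o(N^2)$ is equivalent to $2^{-d}\sum_{r=0}^{d}\binom{d}{r}C_r\to 0$. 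Because $\binom{d}{r}$ concentrates near $r=d/2+O(\sqrt d)$, this says exactly that the final spins at two independent, uniformly chosen vertices—typically at Hamming distance $\approx d/2$—are asymptotically uncorrelated on average.

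The main obstacle is establishing this decorrelation of the final-state two-point function, since it requires genuine control of the dynamics rather than of the static energy landscape. The route I would try is a \emph{localization of dependence}: show that, up to an error probability that is negligible after summing against $\binom{d}{r}$, the final value $\sigma_i^\infty$ is a function only of the initial spins and the update clocks inside a Hamming ball of sublinear radius $\rho(d)=o(d)$ about $i$. Two typical vertices lie at distance $\approx d/2\gg 2\rho(d)$, so their dependence regions are disjoint with high probability, forcing $C_r\approx 0$ for the dominant distances and yielding the bound. The heuristic support is \cref{fig:Figure6}, where the mean number of flips per site is small and grows slowly, suggesting most spins reach a locally frozen verdict after $O(1)$ flips. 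The difficulty is that zero-temperature dynamics has no finite speed of propagation, and a small \emph{average} flip count does not preclude rare long flip-chains that correlate distant spins; converting "typically few, local flips" into a rigorous short-range-dependence statement is the crux, and is the reason this remains a conjecture.

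As a fallback I would consider a direct blocking argument: prove that with probability tending to one the dynamics create, early and in disjoint regions, at least one permanently frozen fully-magnetized subcube of dimension $\lfloor d/2\rfloor+1$ of each sign (these are frozen forever once formed, as in \cref{exp:1}), which alone rules out both ground states. This avoids the full correlation structure, but it trades one hard dynamical input for another—showing that such monochromatic large subcubes actually form with high probability—so I expect the localization/decorrelation route above to be the more promising, if still the hardest, component of any proof.
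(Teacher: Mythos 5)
This statement is a conjecture: the paper does not prove it, and neither does your proposal, so the right comparison is between your strategy and the paper's supporting heuristics. Your opening reduction is sound and worth keeping: the ground-state event is exactly $\{|M_\infty(d)|=N\}$ (a frozen or blinker final state necessarily has $|M_\infty|<N$), global spin-flip symmetry gives $\mathbb{E}[M_\infty]=0$, and Chebyshev reduces the conjecture to $\mathrm{Var}(M_\infty(d))=o(N^2)$, i.e.\ to $2^{-d}\sum_r\binom{d}{r}C_r\to0$. This is a strictly weaker input than the full CLT of \cref{cjt:2}, consistent with the paper's own remark that \cref{cjt:2} implies \cref{cjt:1} and with the $N$-scaled variance data in \cref{fig:Figure3}(b). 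The crux you identify --- decorrelation of the final-state two-point function at the typical Hamming distance $\approx d/2$ --- is exactly the obstruction the paper itself runs into in the discussion following \cref{cjt:2}: the natural route (FKG plus a Newman-type CLT argument for the renormalized block magnetizations $S_{k,j}$) is blocked because the zero-magnetization initial condition is not positively associated, and explicit formulas for $C^d(l)$ are only available for $d\le 4$. So on this route you and the authors stall at the same place; your ``localization of dependence'' idea is a plausible but unproven substitute, and you are right that the small mean flip count in \cref{fig:Figure6} does not by itself exclude long flip chains correlating distant spins.

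Where you genuinely diverge is that the paper's primary heuristic for \cref{cjt:1} is not the second-moment route at all: it is the exponential proliferation of frozen states (\cref{thm:2}, at least $2^{2^{d/2-1}}$ of them) trapping the dynamics, supplemented by a renormalization-style argument that the ground-state probability should roughly square in passing from $d$ to $d+1$ because the density of perturbable spins vanishes. Your ``fallback'' blocking argument --- nucleation, with high probability, of oppositely magnetized frozen subcubes of dimension $\lfloor d/2\rfloor+1$ in disjoint regions --- is essentially a sharpened version of that heuristic, and it has the advantage of requiring only a local one-sided event rather than control of the full correlation structure; but, as you concede, proving that such monochromatic subcubes actually form under the dynamics is itself open. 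In short: your reduction is correct, your main route parallels the paper's CLT discussion and hits the same wall, and your fallback is closest to what the paper actually leans on.
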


\noindent Numerical results (\cref{fig:Figure2}) strongly support this, suggesting in fact that $\mathbb{P}_{\sigma(0), \omega}(\{G^d_\infty = \sigma^{d,+}\text{ or }\sigma^{d,-}\})$ decays to zero exponentially fast. \noindent\cref{thm:2} states that the number of frozen states grows at least exponentially fast with a power factor of $2^{\frac{d}{2}}$. As $d$ grows, the rapidly increasing number of frozen states will trap the dynamics, preventing the hypercube from entering a ground state. We therefore expect the probability for a $d$-dimensional hypercube to enter a ground state to vanish as $d\rightarrow\infty$.\\\newline
Now suppose that for some large $d$ the probability to dynamically find a ground state is $p$. We know $p$ is strictly between $0$ and $1$, since there's positive probability for the hypercube to start with either a frozen state or a ground state. If we run the dynamics on two uncoupled $d$-dimensional hypercubes, the probability for the system to enter a ground state is $p^2$. Now consider a $(d+1)$-dimensional hypercube composed of two $d$-dimensional sub-cubes, coupled in the sense that  corresponding sites on each are connected. If we run the dynamics on the $(d+1)$-dimensional hypercube, we expect the results to resemble that from the uncoupled dynamics, because the density of perturbable spins vanishes as $d\to\infty$. By ``perturbable spin'' we mean one for which a change in the state of a single neighboring spin  can change whether the original flips (or changes the direction of the flip). Consequently, we expect the probability for the $(d+1)$-dimensional hypercube to enter a ground state to be close to $p^2$ for large $d$.\\\newline
The above argument comparing the coupled and uncoupled dynamics depends on the strength of the correlation between its two $d$-dimensional sub-cubes. We address this in the next conjecture. In our simulation, as the dimension increases, the final magnetization concentrates around zero, suggesting a central-limit-type behavior. 
\begin{conjecture}\label{cjt:2}
(Central-Limit-Type Behavior) The final magnetization $M_\infty(N) = M_\infty(d)$ converges in distribution to a standard Gaussian under some rescaling, i.e., there exists a function $f(N)$ such that:
\begin{equation}
\frac{1}{f(N)}M_\infty(N)\xrightarrow[N\rightarrow\infty]{(d)}\mathcal{N}(0,1)\, .
\end{equation}
\end{conjecture}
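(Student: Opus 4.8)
The plan is to write the blinker-excluded final magnetization as a sum of identically distributed, mean-zero, bounded summands indexed by the vertices of $Q_d$, and then to run a Stein's-method central limit theorem for locally dependent fields. Set $X_i = \sigma_i^d(\infty)\,\mathbf{1}\{i\notin g^d_\infty\}$, so that $M_\infty(d)=\sum_{i\in Q_d}X_i$ with $|X_i|\le 1$. The global spin-flip symmetry of both the Hamiltonian and the zero-magnetization initial law forces the joint law of $(X_i)_i$ to be invariant under $\sigma\mapsto-\sigma$, so $\mathbb{E}[X_i]=0$ and $\mathbb{E}[M_\infty(d)]=0$; vertex-transitivity of $Q_d$ makes the $X_i$ identically distributed. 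The natural choice is $f(N)=\sqrt{\operatorname{Var}(M_\infty(d))}$, and the two things I would need to control are (i) the dependency structure of the $X_i$ and (ii) the growth of this variance, ensuring in particular $f(N)\to\infty$ so that the limit is non-degenerate.

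The crucial structural input --- and, I expect, the main obstacle --- is an \emph{influence-decay} estimate for the zero-temperature dynamics: changing the initial spin or the dynamical realization $\omega$ at a site $j$ at Hamming distance $r$ from $i$ should affect $X_i$ only with probability decaying in $r$. The heuristic support is exactly \cref{fig:Figure6}, which shows that the mean number of flips per site is small and grows slowly, together with the fact that once a $k$-core (with $k\ge\lfloor d/2\rfloor+1$) becomes fully magnetized it freezes permanently; thus the long-time fate of $\sigma_i$ is typically sealed by the configuration in a small Hamming neighborhood of $i$. First I would try to make this precise through a coupling / finite-propagation argument, showing that $X_i$ is determined, up to an event of small probability, by the initial data inside a ball $B(i,R)$ of Hamming radius $R=R(d)$ growing only slowly (e.g.\ polynomially in $d$). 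Because such a ball contains only $\sum_{k\le R}\binom{d}{k}=\mathrm{poly}(d)$ sites, the effective dependency neighborhood has size $D=\mathrm{poly}(d)$, a vanishing fraction of $N=2^d$, and the covariances $\operatorname{Cov}(X_0,X_j)$ decay rapidly in the Hamming distance between $0$ and $j$.

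Granting such decay, the remaining steps are comparatively routine. Using vertex-transitivity I would compute the variance as $f(N)^2=\operatorname{Var}(M_\infty(d))=N\sum_{j\in Q_d}\operatorname{Cov}(X_0,X_j)$; the influence-decay estimate makes this sum dominated by $j$ near the origin, yielding $f(N)^2\asymp N\,c(d)$ for a slowly varying $c(d)$ (the numerics of \cref{fig:Figure3} point to $c(d)\sim 1/\log\log N$, hence $f(N)\sim\sqrt{N/\log\log N}$). I would then invoke the standard local-dependence Stein bound for the Wasserstein (or Kolmogorov) distance, which for a sum of $N$ summands bounded by $1$ with dependency neighborhoods of size $D$ and standard deviation $f(N)$ is, up to a universal constant, of the schematic form $d_W\!\big(M_\infty(d)/f(N),\,\mathcal{N}(0,1)\big)\le C\,N D^2/f(N)^3$. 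Substituting $N=2^d$, $D=\mathrm{poly}(d)$ and $f(N)\sim\sqrt{N/\log\log N}$ gives a bound of order $\mathrm{poly}(d)/2^{d/2}\to 0$, which proves the convergence.

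Two technical points would be dispatched separately rather than as part of the core argument. The zero-magnetization constraint on $\sigma(0)$ is an $O(N^{-1/2})$ perturbation of the i.i.d.\ product measure and can be removed by a standard conditioning (local CLT) comparison. The exclusion of the blinker set $g^d_\infty$ changes $M_\infty(d)$ by at most $|g^d_\infty|$, which \cref{fig:Figure9} and \cref{fig:Figure11} indicate is negligible on the scale $f(N)$ in the relevant asymptotics; one would bound $\mathbb{E}|g^d_\infty|$ and absorb it into the error term. I expect everything beyond the influence-decay estimate to be standard; proving that estimate for a genuinely non-local-in-time dynamics, however, is the same kind of difficulty that has kept the analogous questions on $\mathbb{Z}^d$ open, and is where the real work lies.
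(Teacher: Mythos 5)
This statement is a \emph{conjecture}: the paper does not prove it, and its surrounding discussion is explicitly a heuristic sketch of a possible route (block-renormalization of $Q_d$ into sub-cubes $Q_{k,j}$ plus the Newman--FKG criterion of vanishing block correlations), together with an honest admission of why that route is currently blocked (the zero-magnetization initial law is not positively associated, and $C^d(l)$ is only computable for $d=3,4$). Your proposal is likewise not a proof, and you say so yourself; but the gap is worse than ``the influence-decay lemma remains to be proved.'' The local-dependence framework you set up is \emph{inconsistent with the paper's own data}. If $X_i$ were determined (up to small error) by the initial data in a dependency neighborhood of size $D=\mathrm{poly}(d)$, then $\operatorname{Var}(M_\infty)=N\sum_j\operatorname{Cov}(X_0,X_j)=O(N\cdot\mathrm{poly}(d))$, so $f(N)$ would be $\sqrt{N}$ up to polylogarithmic factors. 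But \cref{fig:Figure3} shows that the correct normalization is $f(N)\sim N/\log\log N$ (not $\sqrt{N/\log\log N}$ as you write --- the paper scales $M_\infty$ itself, not its variance, by $N/\log\log N$), i.e.\ $\operatorname{Var}(M_\infty)\sim N^2/(\log\log N)^2$. That forces $\sum_j\operatorname{Cov}(X_0,X_j)\sim N/(\log\log N)^2$: covariances are \emph{not} summable, a typical pair of spins at Hamming distance of order $d/2$ retains covariance of order $1/(\log\log N)^2$, and the system exhibits near-macroscopic fluctuations driven by the large frozen $k$-cores of core number $\approx 3d/4$. So the influence-decay estimate you identify as ``where the real work lies'' is not merely hard; the numerics indicate it is false, and the Stein local-dependence CLT cannot be the right mechanism.

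Two further concrete problems. First, a Hamming ball of radius $R$ in $Q_d$ contains $\sum_{k\le R}\binom{d}{k}$ sites, which is $\mathrm{poly}(d)$ only if $R=O(1)$; for $R$ ``growing polynomially in $d$'' (even $R\sim\sqrt{d}$) this count is superpolynomial, so your claim $D=\mathrm{poly}(d)$ does not follow from slowly growing $R$. Second, dismissing the zero-magnetization constraint as a ``standard $O(N^{-1/2})$ perturbation'' sidesteps exactly the obstruction the paper highlights: conditioning on $\sum_i\sigma_i(0)=0$ destroys positive association (introducing global negative correlations of order $1/N$ whose aggregate effect on an order-$N$ fluctuation is not negligible), which is precisely why the FKG-based route of \cite{Newman1980} does not apply here. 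In short: the conjecture remains open, your proposed route differs from the paper's sketched route (Stein's method versus block correlations plus FKG), and neither closes the gap --- but yours additionally rests on a locality hypothesis that the paper's measured variance scaling contradicts.
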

\noindent Note that if \cref{cjt:2} is true, then \cref{cjt:1} follows. Our numerical study (\cref{fig:Figure3} (a) and (b)) suggests that the correct scale $f$ should satisfy $\sqrt{N} \ll f(N) \ll N$. For simplicity, we consider $d$ to be odd to avoid the presence of blinkers. Then given two spins $\sigma^d_{i_1}$ and $\sigma^d_{i_2}$ on a $d$-dimensional hypercube $Q_d$ with odd $d$, we have by symmetry that their 2-spin correlation function depends only on their separation $l$, where $l = \mathbf{d}_h(i_1,i_2)\leq d$ and $\mathbf{d}_h$ is the Hamming distance. We then define the equal-time 2-spin correlation function as $t\rightarrow\infty$ as
\begin{equation}
    C^d(l) = \lim_{t\rightarrow\infty}{\rm Cov}(\sigma^d_{i_1}(t),\sigma^d_{i_2}(t)),\text{ where }l = \mathbf{d}_h(i_1,i_2)\, .
\end{equation}
\noindent Next consider a $k$-renormalization of $Q_d$ into $Q_{d-k}$, where each vertex $j$ corresponds to a $k$-dimensional sub-cube $Q_{k,j}$ with $k\geq \lfloor\frac{d}{2}\rfloor+1$. For $j_1,j_2\in Q_{d-k}$, we consider the final magnetization $S_{k,j_1}$ and $S_{k,j_2}$ for sub-cubes $Q_{k,j_1}$ and $Q_{k,j_2}$:
\begin{equation}
    S_{k,j_1} = \lim_{t\rightarrow\infty}\sum_{i\in Q_{k,j_1}}\sigma^d_i(t)\, .
\end{equation}
Therefore the correlation between the final magnetization $S_{k,j_1}$ and $S_{k,j_2}$ is
\begin{equation}
    \lim_{t\rightarrow\infty}{\rm Corr}(S_{k,j_1},S_{k,j_2}) = \lim_{t\rightarrow\infty} \frac{1}{\sqrt{{\rm Var}(S_{k,j_1}){\rm Var}(S_{k,j_2})}}\sum_{i_1\in Q_{k,j_1}}\sum_{i_2\in Q_{k,j_2}}{\rm Cov}(\sigma^d_{i_1}(t),\sigma^d_{i_2}(t))
\end{equation}
where
\begin{equation}
    {\rm Var}(S_{k,j_1}) = \sum_{i_1,i_2\in Q_{k,j_1}}{\rm Var}(\sigma^d_{i_1}(t),\sigma^d_{i_2}(t))\, .
\end{equation}
\noindent A standard argument(\cite{Newman1980}) for proving  CLT behavior involves showing that the correlation between the final magnetizations $S_{k,j_1}$ and $S_{k,j_2}$ vanishes as $d\rightarrow\infty$, making use of the Fortuin–Kasteleyn–Ginibre (FKG) inequality\cite{Harris1977}:
\begin{equation}
{\rm Cov}(f(\sigma^d(t)), g(\sigma^d(t))\geq 0, \text{ where $f,g$ are increasing functions}\, .
\end{equation}
But here there is a problem preventing use of this approach: the FKG inequality usually requires configurations with positively correlated initial conditions, which is not the case for initial configurations with zero magnetization. An alternate initial configuration which satisfies the conditions for use of the FKG inequality is the symmetric i.i.d initial condition. To prove the asymptotic vanishing of the correlation between the final magnetization $S_{k,j_1}$ and $S_{k,j_2}$ it would be useful to have an explicit formula for $C^d(l)$. We have obtained explicit formulas for $C^3(l)$ and $C^4(l)$, but the computation quickly becomes unwieldy as $d$ grows larger due to the growing number and complexity of frozen states.\\\newline
Our numerical results (see \cref{fig:Figure11}) show hypercubes enter blinker states with high probability on high even dimensions ($d=18$); our next conjecture addresses this.
\begin{conjecture}
The probability for an even $d$-dimensional hypercube to enter a blinker state goes to one as $d\rightarrow\infty$, i.e.:
\begin{equation}
\lim_{d \text{ even},d\rightarrow\infty}\mathbb{P}_{\sigma(0), \omega}(|G^d|>1)=1\, .
\end{equation}
\end{conjecture}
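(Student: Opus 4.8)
\noindent \emph{Proposed approach.} The plan is to attack the complementary event and show that, for even $d$, the probability of halting in a configuration admitting \emph{no} zero-energy flip tends to zero. A purely frozen or ground final state requires every one of the $2^d$ sites to have a strict majority of agreeing frozen neighbors (local field magnitude at least $2$), whereas a site with exactly $d/2$ agreeing and $d/2$ disagreeing frozen neighbors is by definition a $1$-blinker. Hence it suffices to prove that, with probability tending to one, the final state contains at least one \emph{balanced} frozen site. Writing $X_d$ for the number of blinker sites, the site-symmetry of $Q_d$ gives $\mathbb{E}[X_d] = 2^d\,\mathbb{P}_{\sigma(0),\omega}(\sigma^d_{\bar 0}\in g^d_\infty)$, where this per-site probability is precisely the atom at $m_\infty(d)=0$ visible in \cref{fig:Figure4}. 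The first goal is therefore a lower bound on $\mathbb{P}(\sigma^d_{\bar 0}\in g^d_\infty)$ that decays strictly more slowly than $2^{-d}$, which forces $\mathbb{E}[X_d]\to\infty$. Note that this is already self-consistent with the numerics: combining the conditional fit $\mathbb{P}(\sigma_i\in g_\infty\mid |G^d_\infty|>1)\approx a\,d^{-1}2^{-d/2}$ of \cref{fig:Figure10} with the conjectured $\mathbb{P}(|G^d_\infty|>1)\to 1$ gives $\mathbb{E}[X_d]\approx 2^{d/2}/d\to\infty$.

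\noindent I would try to establish the required lower bound using the recursive constructions of \cref{thm:1} and \cref{exp:4}. A $(d+2)$-cube decomposes under a $2$-renormalization into four $d$-cubes, and \cref{exp:4} shows how a $1$-blinker on $Q_8$ is promoted to a $2$-blinker on $Q_{10}$; iterating, a $Q_d$ contains exponentially many (in $d$) disjoint sub-structures each able to host a blinker-pinning frozen environment. The key quantitative point is that the fully magnetized sub-cubes used to pin a $1$-blinker need only be \emph{frozen}, not present from $t=0$: a site becomes a blinker the moment the dynamics freezes a balanced neighborhood around it. I would thus bound $\mathbb{P}(\sigma^d_{\bar 0}\in g^d_\infty)$ from below by the probability that the dynamics freezes one prescribed family of $(d/2+1)$-dimensional sub-cubes into the balanced pattern, and argue, via the vanishing density of perturbable spins invoked in the discussion of \cref{cjt:1}, that this probability stays above a quantity of order $d^{-1}2^{-d/2}$ rather than the naive doubly-exponential initial-condition estimate.

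\noindent To upgrade $\mathbb{E}[X_d]\to\infty$ to $\mathbb{P}(X_d\ge 1)\to 1$ I would run a second-moment argument, estimating $\mathbb{E}[X_d^2]=\sum_{i,j}\mathbb{P}(\sigma_i\in g^d_\infty,\ \sigma_j\in g^d_\infty)$ and aiming for $\mathbb{E}[X_d^2]/\mathbb{E}[X_d]^2\to 1$. By symmetry the pair term depends only on the Hamming separation $l=\mathbf{d}_h(i,j)$, so this reduces to controlling the two-point blinker correlation and showing that well-separated sites are asymptotically independent. A possibly cleaner alternative is a recursive coupling: exploiting the near-independence of the four $d$-cubes inside $Q_{d+2}$ (again through the vanishing perturbable-spin density), one would show that the no-blinker probability satisfies a contraction of the form $\mathbb{P}(X_{d+2}=0)\le \mathbb{P}(X_d=0)^4\cdot(1+o(1))$, with the correction accounting for balanced sites that the coupling tips into strict majorities; such a recursion would drive $\mathbb{P}(X_d=0)$ to zero very rapidly.

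\noindent The main obstacle, and the reason this remains a conjecture, is that every estimate above concerns the final-state law of a genuinely non-equilibrium absorbing dynamics, for which no explicit stationary or final-state distribution is available. Concretely, the hardest step is the decorrelation/second-moment bound: the numerics of \cref{fig:Figure10} exhibit an effective \emph{repulsion} between blinkers, so the events $\{\sigma_i\in g^d_\infty\}$ and $\{\sigma_j\in g^d_\infty\}$ are negatively correlated at short range and the ``room'' needed for an isolated blinker grows with $d$. Converting this qualitative repulsion into a quantitative decorrelation estimate at large $l$, ideally an analogue of the FKG-type decoupling discussed for \cref{cjt:2} but valid for the zero-magnetization ensemble, is precisely the gap; it would likely require first obtaining rigorous control of the final two-spin correlation $C^d(l)$ well beyond the $d=3,4$ cases, together with a rigorous version of the ``vanishing density of perturbable spins'' heuristic that underlies both the lower bound and the coupling.
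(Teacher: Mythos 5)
This statement is a \emph{conjecture}; the paper offers no proof of it, only a one-sentence heuristic (exponentially many sites provide ``more room'' for blinkers) together with the numerical evidence of \cref{fig:Figure11} and the observation that the local-field distribution in \cref{fig:Figure4} retains mass at zero. So there is no proof in the paper for your attempt to match, and your proposal --- which is candid that it is a programme rather than a proof --- should be judged as an independent strategy. As such it is a genuinely different and considerably more structured route: you correctly reduce the complementary event to the absence of any balanced site in the final state (for even $d$ a frozen site must have local field of magnitude at least $2$, so any balanced site is active and forces $|G^d_\infty|>1$), and you propose a first-moment lower bound on $\mathbb{E}[X_d]$ followed by a second-moment or recursive-coupling upgrade. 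This framing is sound, is consistent with the paper's fits, and pinpoints the right obstruction (decorrelation of the events $\{\sigma_i\in g^d_\infty\}$ under a non-equilibrium absorbing dynamics with observed blinker repulsion, and the lack of an FKG-type tool for the zero-magnetization ensemble).

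Two cautions. First, the step you lean on for the first moment --- that the probability of dynamically freezing a prescribed blinker-pinning family of sub-cubes is of order $d^{-1}2^{-d/2}$ rather than doubly-exponentially small --- is itself unsupported: the constructions in Section 4.2 and \cref{app:constructblinker} only exhibit \emph{existence} of such final states, and the ``vanishing density of perturbable spins'' heuristic from the discussion of \cref{cjt:1} is not a quantitative statement, so this step is essentially as hard as the conjecture. Second, a minor imprecision: a balanced site in a general final state need not be a $1$-blinker in the paper's sense (its neighbors need not all be frozen, cf.\ the $2$-blinker of \cref{fig:Figure19}); this does not affect your reduction, since any active site in $G_\infty$ already certifies $|G^d_\infty|>1$, but the identification of the atom at $m_\infty(d)=0$ with the $1$-blinker probability should be stated as an inequality rather than an equality.
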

\noindent Intuitively, this is because the number of sites grows exponentially as dimension increases, providing more room for blinker configurations. The results shown in~\cref{fig:Figure4} adds additional support since they show that on high-dimensional hypercubes, the distribution of local fields $m_\infty(d)$ is supported on all allowed values, rather than concentrating somewhere away from zero. 

\begin{conjecture} 
In double-copy dynamics, the final overlap of the twins is strictly between $0$ and $1$ as $d\rightarrow\infty$, i.e.:
\begin{equation}
\lim_{d\rightarrow\infty}q_\infty(d)\in(0,1)\, . 
\end{equation}   
\end{conjecture}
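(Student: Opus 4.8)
The plan is to reduce the conjecture to a single-site quantity and then bound it from both sides. I would first work along odd $d$, where no blinkers occur ($g^d_\infty=\emptyset$) and the exclusion in $q_t$ is vacuous; the even-$d$ subsequence I would treat afterward, arguing it shares the same limit. By the site-symmetry of $Q_d$ every summand of $q_t(d)$ has the same law, and since the two twins are conditionally independent given the common initial configuration $\sigma(0)$,
\begin{equation}
\overline{q_\infty(d)} = \overline{\sigma^1_{\bar 0}(\infty)\,\sigma^2_{\bar 0}(\infty)} = \mathbb{E}_{\sigma(0)}\!\left[\,m_{\bar 0}(\sigma(0))^2\,\right],\qquad m_{\bar 0}(\sigma(0)) := \mathbb{E}_\omega\!\left[\sigma_{\bar 0}(\infty)\mid\sigma(0)\right].
\end{equation}
Hence $\overline{q_\infty}>0$ means the dynamics does not fully randomize the origin ($m_{\bar 0}\not\equiv 0$), and $\overline{q_\infty}<1$ means it does not fully determine it either ($\mathbb{P}(|m_{\bar 0}|<1)>0$). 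The two bounds are genuinely independent tasks, and I would attack them by identifying explicit positive-probability events in the initial configuration.

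For the lower bound I would exhibit a permanently frozen origin. Call the origin permanently frozen if it never flips in any realization; on that event $m_{\bar 0}=\sigma_{\bar 0}(0)=\pm1$, so $\overline{q_\infty}\ge \mathbb{P}(\text{origin permanently frozen})$. The key is a rigidity criterion: if in $\sigma(0)$ the origin lies in a monochromatic $k$-core of its same-sign subgraph with $k\ge\lfloor d/2\rfloor+1$, it is permanently frozen. Indeed every vertex of such a core has a strict majority of agreeing neighbors inside the core, and inspecting the first time any core vertex could flip yields a contradiction, so the core is dynamically invariant no matter what happens outside it (cf.\ \cref{exp:1} and \cref{thm:2}). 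The lower bound thus reduces to the purely static estimate
\begin{equation}
\liminf_{d\to\infty}\ \mathbb{P}\big(\bar 0\ \text{lies in a monochromatic }(\lfloor d/2\rfloor+1)\text{-core of }\sigma(0)\big)>0,
\end{equation}
i.e.\ to the existence of a giant high-threshold core in a half-random hypercube, which is exactly the regime the numerics of \cref{fig:Figure7,fig:Figure8} probe (largest core near $k\approx 3d/4>d/2$).

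For the upper bound I would construct, with probability bounded away from $0$ in $d$, a local initial gadget on which the origin locks to a \emph{dynamics-dependent} sign. The gadget surrounds the origin with $\lfloor d/2\rfloor-1$ permanently frozen agreeing-plus neighbors, a matching set of frozen minus neighbors (each stabilized by its own high-threshold monochromatic core as above), and a bounded number of ``swing'' neighbors whose eventual signs are themselves undecided. A short case analysis shows that the resolution order of the swings can drive the origin either to a plus majority or to a minus majority, that each such locking outcome leaves the origin frozen (not a blinker, so it is counted), and that both occur with positive probability. On the gadget event $|m_{\bar 0}|<1$, giving $\overline{q_\infty}\le 1-c$ for some $c>0$.

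The main obstacle is the giant-core estimate underlying the lower bound. The threshold $k=\lfloor d/2\rfloor+1$ equals the typical in-degree $\approx d/2$ of the monochromatic subgraph at density $1/2$, so one sits exactly at the critical scale for $k$-core emergence on $Q_d$, where elementary branching-process heuristics are inconclusive; moreover the natural FKG route is blocked because the zero-magnetization constraint destroys positive association, as noted in the discussion following \cref{cjt:2}. I would circumvent FKG by arguing directly from the core criterion, first replacing the constrained measure by the i.i.d.\ symmetric product measure (whose local neighborhood statistics differ by $O(2^{-d/2})$ per site, controllable via a local-CLT comparison), thereby reducing everything to core percolation for Bernoulli$(1/2)$ site-percolation on the hypercube. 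Establishing a non-vanishing giant-core probability there is the crux and the step I expect to be hardest.
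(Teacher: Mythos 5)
First, a point of calibration: the paper does not prove this statement. It is stated as a conjecture, and the only support offered is heuristic (the exponential proliferation of frozen states from \cref{thm:2} argues against the overlap tending to $1$; the small mean number of flips per site in \cref{fig:Figure6} argues against it tending to $0$) together with the numerics of \cref{fig:Figure13}. So there is no ``paper's proof'' to match; what you have written is an original strategy, and it should be judged on whether it closes. Your opening reduction is sound and is a genuine contribution beyond what the paper says: conditional independence of the twins given $\sigma(0)$ plus site-transitivity gives $\overline{q_\infty(d)}=\mathbb{E}_{\sigma(0)}[m_{\bar 0}(\sigma(0))^2]$, cleanly separating ``nature'' (the conditional mean $m_{\bar 0}$) from ``nurture.'' The rigidity criterion you invoke (a monochromatic $k$-core with $k\ge\lfloor d/2\rfloor+1$ is dynamically invariant) is also correct and consistent with \cref{exp:1}.

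The gaps are in both probability estimates, and they are not merely technical. For the lower bound, you need the origin to lie in a high-threshold monochromatic core \emph{of the initial configuration} with probability bounded away from zero, but the evidence you cite (\cref{fig:Figure7,fig:Figure8}, cores of size $\approx 3d/4$) concerns cores of the \emph{final} state, which are built up by the dynamics; at $t=0$ the same-sign degree of a site is $\mathrm{Bin}(d,1/2)$-like, so the threshold $\lfloor d/2\rfloor+1$ sits exactly at criticality and it is entirely plausible that the initial-core probability vanishes as $d\to\infty$. If so, your lower bound degenerates; the paper's own heuristic for positivity (most spins flip at most once, \cref{fig:Figure6}) points to a different mechanism --- the origin is typically frozen \emph{after} an $O(1)$ number of flips, not frozen from the start --- which your argument does not capture. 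For the upper bound, the gadget event must have probability bounded below \emph{uniformly in $d$}, but as described it prescribes the signs of $\Theta(d)$ neighbors of the origin and requires each to be supported by its own monochromatic $(\lfloor d/2\rfloor+1)$-core, an event whose probability decays at least exponentially in $d$; this yields $\overline{q_\infty(d)}\le 1-c_d$ with $c_d\to 0$, which does not rule out $\overline{q_\infty(d)}\to 1$. Two smaller issues: the conjecture concerns the random variable $q_\infty(d)$, so after controlling the mean you still need a concentration statement (cf.\ the breadth of the distributions in even dimensions); and the claim that the even-$d$ subsequence ``shares the same limit'' is itself unestablished --- the paper's data show a persistent even--odd disparity, and the blinker set $g^d_\infty$ can differ between the two twins, which makes even the definition of $q_t(d)$ delicate in even dimensions. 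As you yourself note, the critical-core percolation estimate is the crux; until it (or a substitute mechanism for freezing the origin) is established, the argument does not close.
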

\noindent In both even and odd dimensions, numerical results (\cref{fig:Figure13}) suggest that $q_\infty(d)$ increases monotonically, with a possible limit around $\frac{1}{2}$. Given that the number of frozen states grows exponentially fast with the dimension, the hypercube will have many possible final states whose overlap is small; hence it seems implausible for the final overlap to approach~one as $d\rightarrow\infty.$ At the same time, the low average number of flips per site implies that it would be difficult for double-spin dynamics to evolve toward two final configurations that are substantially different; therefore we believe that the final overlap will be strictly positive as $d\rightarrow\infty$.

\appendix
\label{app:finalstate}
\section{Final State Algorithm for obtaining $G_\infty$}
\label{app:finalstate}
%\noindent\textbf{Final State Algorithm for obtaining, $G^d_\infty$} 

\begin{enumerate}
    \item Run the dyanmics until the active list $a(t)$ is small and contains only spins that can perform a zero-energy flip.  (This step is not necessary but will reduce the number of restarts.)
    \item Perform a Breadth First Search (BFS) of the graph, $\mathcal{G}(t)$ whose vertices are elements of $G(t)$ and whose edges are allowed, single spin flip transitions.   
    \begin{enumerate}
        \item If an energy lowering spin flip is encountered in the BFS then $G(t) \neq G_\infty$ and the dynamics must be run to a later time before the final state algorithm is run again.
        \item If the BFS terminates without encountering an energy lowering spin flip then the final state has been identified, $G(t) = G_\infty$ and $\mathcal{G}(t)= \mathcal{G}_\infty$. 
    \end{enumerate}
\end{enumerate}
%For $t$ small, the cardinality for $G^d(t)$ is exponentially huge and useless. Therefore, one should implement the following algorithm for $t$ large enough such that the active list $a^d(t)$ is small: The dynamics define a directed graph with vertices in $G^d(t)$, such that the directed edges correspond to any of the possible single steps of the dynamics. We can then find $G^d(t)$ by performing a Breadth First Search (BFS) starting with a root at the configuration at $t$, and the number of vertices at depth 1 should be $|a^d(t)|$. To determine whether $G^d(t) = G^d_\infty$, we perform a Depth First Search on $G^d(t)$ and we stop once we find a spin with energy change $\Delta E < 0$ in the configuration on a vertex, this means that $G^d(t) \neq G^d_\infty$ and the dynamics needs to be run further to find $G^d_\infty$; otherwise if the search successfully goes through the whole tree, then $G^d(t) = G^d_\infty$.

\section{Explicit Construction for Configurations}
\label{app:constructblinker} 
%\noindent\textbf{Explicit Construction for a blinker configuration at d = 8} 
\subsection{A blinker configuration at d = 8}
\noindent Set the following four $5$-dimensional sub-cubes to be all plus, 
\[
\text{(1***$|$00**), (*1**$|$**00), (**1*$|$00**), (***1$|$**00)}.
\]
Let $M$ be the union of these plus sub-cubes together with the origin, which can be either plus or minus.
Set the complement of $M$ to be all minus and call it $N$.\\
\newline
\textit{Claim:} The above configuration of an $8$-dimensional cube comprises a final state with a 1-blinker site at the origin $\Vec{0}$.
\\
\newline
First note that the origin has 4 plus and 4 minus neighbors as required for a 1-blinker.
We now give an explicit construction for $N$ and show that it is frozen.  We assert that $N$ is the union of the following four $5$-dimensional subcubes and four $6$-dimensional subcubes:
\[
\text{(*0*0$|$1***), (*0*0$|$*1**), (0*0*$|$**1*), (0*0*$|$***1)}.
\]
\[
\text{(****$|$1*1*), (****$|$1**1), (****$|$*11*), (****$|$*1*1)}.
\]
It is straightforward to check that each of these subcubes is disjoint from each plus subcube in $M$. Therefore, the union of these minus subcubes is contained in $N$. By explicit counting, $|M| = 88$ and therefore $|N| = 256-88=168$.  By a straightforward but tedious count, one can see that the cardinality of the proposed construction indeed has cardinality $168$. Finally, the proposed construction of $N$ is frozen  since it is composed entirely of subcubes with dimensions greater or equal to 5.
\subsection{A frozen configuration at d = 10 with a specific site i having 6 plus neighbors and 4 minus neighbors}
\noindent By symmetry, without loss of generality, we can assume $i = \Vec{0}$. We use a similar string annotation as previous for the blinker configuration at $d = 8$, and again we use $*$ to indicate that the given coordinate can be either 0 or 1. But this time we divide the string such that there are 6 places on the left and 4 places on the right of the vertical line. We assign plus spins to all 6 sites with a 1 on the left, and assign minus spins to all 4 sites with a 1 on the right. At $t=0$, for each plus spin we assign a $6$-dim sub-cube with all plus spins, and similarly for minus spins: 
\[
\begin{aligned}
\text{plus } 6\text{-dim sub-cubes:} \quad & \text{Type I:  (1**;***$|$*000), (*1*;***$|$*000), (**1;***$|$*000)}; \\
& \text{Type II: (***;1**$|$000*), (***;*1*$|$000*), (***;**1$|$000*)};
\end{aligned}
\]

\[
\begin{aligned}
\text{minus } 6\text{-dim sub-cubes:} \quad & \text{Type I:  (***;000$|$***1), (***;000$|$**1*), (***;000$|$*1**)}; \\
& \text{Type II: (000;***$|$1***)};
\end{aligned}
\]
\noindent Sub-cubes are labeled Type I and II to help see that the plus and minus 6-dim sub-cubes are disjoint. We choose the complementary spins to be arbitrary and let this initial configuration evolve through the Glauber dynamics. Then we obtain the configuration as desired.
\subsection{A two-site blinker configuration at d = 10}
\noindent We assume the sites for the two blinkers to be (0000000000) and (1000000000). We further assume the 5 sites from (010000$|$0000) to (000001$|$0000) to be all plus, 4 sites from (000000$|$1000) to (000000$|$0001) to be all minus, and assume the 4 sites from (11000$|$00000) to (10001$|$00000) to be all plus, 5 sites from (10000$|$10000) to (10000$|$00001) to be all minus. At $t=0$, for the plus neighbors of (0000000000) we arrange six $6$-dim sub-cube with all plus spins, and four $5$-dim sub-cube with all minus spins for the minus neighbors. Similarly, for the plus neighbors of (1000000000) we assign four $5$-dim sub-cube with all plus spins, and a six $6$-dim sub-cube with all minus spins for its minus neighbors: 
\[
\begin{aligned}
\text{plus } 6\text{-dim sub-cubes for (0000000000):} \quad & \text{Type I:  (0,1**,**$|$**,00), (0,*1*,**$|$**,00), (0,**1,**$|$**,00)}; \\
& \text{Type II: (0,**,1**$|$00,**),(0,**,*1*$|$00,**), (0,**,**1$|$00,**)};
\end{aligned}
\]
\[
\begin{aligned}
\text{plus } 5\text{-dim sub-cubes for (1000000000):} \quad & \text{Type I:  (1,1*,**$|$**,000), (1,*1,**$|$**,000)}; \\
& \text{Type II: (1,**,1*$|$000,**), (1,**,*1$|$000,**)};
\end{aligned}
\]
\[
\begin{aligned}
\text{minus } 6\text{-dim sub-cubes for (1000000000):} \quad & \text{Type I: (1,**,00$|$**,**1), (1,**,00$|$**,*1*),(1,**,00$|$**,1**)}; \\
& \text{Type II: (1,00,**$|$**1,**), (1,00,**$|$*1*,**),(1,00,**$|$1**,**)};
\end{aligned}
\]
\[
\begin{aligned}
\text{minus } 5\text{-dim sub-cubes for (0000000000):} \quad & \text{Type I:  (0,**,000$|$**,*1), (0,**,000$|$**,1*)}; \\
& \text{Type II: (0,000,**$|$*1,**), (0,000,**$|$1*,**)};
\end{aligned}
\]
Note that the 5-cubes are not stable by themselves and are stabilized by 6-cubes of the same sign.  For example, the plus 5-cube (11***$|$**000) is stabilized by the plus 6-cubes in (01****$|$**00). This requires that every site in the set (11***$|$**000) has a neighbor in the set (01****$|$**00), which is to say that all the coordinates except the first agree. In the above list, for convention, the first two Type I(or II) plus(or minus) 6-dim sub-cubes stabilizes the Type I(or II) plus(or minus) 5-dim sub-cubes. Finally, we choose the complementary spins to be arbitrary and let this initial configuration evolve through the Glauber dynamics.
We then obtain the configuration as desired. In particular, if we set the complementary spins to be all minus (or plus), numerically we've verified that such configuration is already frozen.\\\newline
\noindent Remark: For even $d\geq 12$, the stabilization of some subcubes by others is not always necessary.
\\\newline

%\noindent\textbf{Symmetric i.i.d Initial Condition} The symmetric i.i.d initial condition serves as a good candidate for a positively correlated initial condition and is hence applicable for the FKG inequality. \cref{fig:Figure25} examine the sample distribution of the final magnetization $\tilde{M}_\infty(N)$ when starting with symmetric i.i.d initial condition. Due to the possible non-zero magnetization in the initial condition, there's a larger probability for hypercubes to enter the ground states, hence the sample distribution of $\tilde{M}_\infty(N)$ is broader than $M_\infty(N)$. But for large $d$, the sample distribution of $\tilde{M}_\infty(N)$ still behaves like a Gaussian.
%\begin{figure}[h!]
%    \centering
%    \includegraphics[width=0.4\linewidth]{iidfinalmag.pdf}
%    \caption{Sample distribution of $\tilde{M}_\infty(N)$ normalized by $N$ at different dimensions}
%    \label{fig:Figure25}
%\end{figure}\\
%
\bibliographystyle{plain}
\bibliography{ref}

\end{document}